\documentclass[11pt,letter]{article}

\usepackage[margin=2cm]{geometry}

\usepackage[ruled]{algorithm2e}
\usepackage[hidelinks]{hyperref}

\usepackage[utf8]{inputenc}

\usepackage{subfig}
\usepackage[normalem]{ulem}
\usepackage{float}
\usepackage{fancyhdr}
\usepackage{amssymb}
\usepackage{algpseudocode}
\usepackage{amsfonts}
\usepackage{graphicx}
\usepackage{tikz}
\usepackage{hyperref}
\usepackage{outlines}
\usepackage{mathtools}

\newcommand{\alg}[1]{\textnormal{\scshape #1}}

\newcommand{\ignore}[1]{}

\newtheorem{theorem}{Theorem}

\newenvironment{proof}{\noindent\bf{Proof }\rm}{\hfill$\blacksquare$\bigskip}

\newcommand{\eps}{\epsilon}
\newcommand{\OGT}{\mathsf{OnlineGreedyTrackingV2}}

\newcommand{\px}{p_{max}}

\newcommand{\Span}{\textsf{Span}} 
\newcommand{\ALG}{\operatorname{ALG}}
\newcommand{\OPT}{\textsf{OPT}} 
\newcommand{\opt}{\textsf{opt}} 
\newcommand{\hj}{\hat{j}} 
\newcommand{\bI}{\bar{I}} 
\newcommand{\tI}{\tilde{I}} 
\newcommand{\hI}{\hat{I}} 
\newcommand{\bs}{\bar{s}} 
\newcommand{\cost}{\text{cost}}

\newcommand{\QQ}{{\mathbb{Q}}}
\newcommand{\ZZ}{{\mathbb{Z}}}

\newcommand{\cA}{\mathcal{A}}

\usepackage{parskip}
\usepackage{setspace}

\begin{document}

\title{On Randomized Online Span Minimization}
\author{
Adrian Calinescu\thanks{Authors are listed in order of decreasing contribution.
{\tt adriancyay@gmail.com}.}
\and
Gruia C\u alinescu\thanks{Work completed in part while at Illinois Institute of Technology.
{\tt gcali9999@gmail.com}.}
\and
Peng-Jun Wan\thanks{pengjunwan@gmail.com}
}%

\date{}
\maketitle

\begin{abstract}
We study the online Busy Time scheduling model on a single machine of unbounded
capacity, with non-preemptive jobs. In our setting, ``flexible" jobs 
arrive online with a processing time and deadline, both of which become known to the algorithm at the job's arrival time.
The goal is to schedule jobs on the machine to finish all jobs by their deadlines,
so that the total time when the machine is turned on (busy time, also called {\em span} in this setting) is minimized.

We present a randomized online algorithm with an exact competitive ratio of 
$1 + e < 3.72$, where $e$ is Euler's number,
against an oblivious adversary,
and show that no randomized algorithm  can have a competitive ratio better than $e$
against an oblivious adversary.  This  lower bound
holds even for algorithms that are allowed to restart jobs and are given lookahead. 
Previous work either dealt with special cases,  or with deterministic algorithms,
for which the known upper bound on the competitive ratio is $5$
and the known lower bound is $4$.
Our findings offer fresh insights into randomization in online energy-aware scheduling.

In the setting where jobs have uniform processing times, and
a job that is started by the algorithm must be finished, we show that
no deterministic algorithm can do better than 2, even when the jobs are agreeable\footnote{Independently obtained in \cite{LKT26}.}.
This deterministic lower bound also holds for uniform processing times and 
restarts in the scenario where a job's deadline is
only revealed at its starting deadline.

For Capacitated Busy Time with  $p_{max}$-lookahead, we obtain
a deterministic online algorithm with competitive ratio at most $8$
and a randomized  online algorithm with competitive ratio at most $4+ e  < 6.72$,
against an oblivious adversary.
\end{abstract}

\section{Introduction}

We study the Busy Time Scheduling problem in the following setting.
Job $j$ is characterized by release time
$r_j$, processing time $p_j$, and deadline $d_j$.
The interval $[r_j,d_j)$ is also called the {\em window} $W_j$ of job $j$. 
Every job is feasible: $0<p_j\le d_j-r_j$.
Jobs are flexible in the sense that the size of the window of a job can be much larger than its processing time.
The algorithm must decide for each 
job $j$  an interval $I_j$ of length  $p_j$ contained in $W_j$ during which job $j$ is
to be processed by the machine. In this work, the machine can be turned on and off,
and has enough capacity to accommodate all the jobs at the same time, i.e., in parallel. 
The objective is to minimize the
so-called {\em busy machine time}, 
which is the  total length of $\bigcup_j I_j$,
also called {\em span} in the literature.
Formally, for a measurable set of real numbers $V$, we use 
$\lambda(V)$ to denote the Lebesgue measure of $V$,
and the objective is to minimize
$\lambda \left( \bigcup_j I_j \right)$.

This problem is motivated by applications in cloud computing.
Cloud computing is essential in today’s digital landscape due to its numerous benefits.
Energy is a crucial and limited resource, and its consumption is becoming an increasingly important concern.
Computing environments contribute significantly to global energy consumption, and this share is rapidly expanding \cite{cloudenergy}.
To address this, modern hardware is progressively integrating various energy-saving features.
Consequently, scheduling algorithms must be designed with a focus not only on time and space efficiency but also on minimizing energy consumption.
One of the most prevalent power-management techniques
is the power-down mechanism, which allows the processor to enter
a sleep state where it consumes minimal energy.
This gives rise to Busy Time problems.

The busy-time metric is also closely related to several important problems in optical network design.
For instance, it plays a key role in reducing the fiber costs associated with Optical Add-Drop Multiplexers (OADMs) \cite{FMMSSTZ10}.
The use of busy-time models in optical network design has been widely studied and documented in the literature
\cite{FMMSSTZ10,FMMSZ21,winkler2003wavelength,alicherry2003line}.

In this work, 
we are primarily interested in the online setting (see \cite{BEY98} for an introduction to online algorithms),
where a job $j$ becomes known to the algorithm at
its release time $r_j$.

For an instance (set of jobs) $J$, 
define $\opt(J)$ to be the objective of an optimum solution on input $J$.
For an algorithm $\ALG$, define $\ALG(J)$ to be the cost of $ALG$
on input $J$. The competitive ratio of $A$ is:
\[
\sup_{J \, \OPT(J) > 0} \frac{\ALG(J)}{\OPT(J)}.
\]
Above we assumed that $\ALG$ is a deterministic algorithm, and 
therefore we have an {\em adversary} who chooses the instance $J$ knowing
exactly what $\ALG$ will do on input $J$.
If $\ALG$ is a randomized algorithm, its competitive ratio is:
\[
\sup_{J \, \OPT(J) > 0}  \frac{\mathbb E[\ALG(J)]}{\OPT(J)}.
\]
The expectation is over the internal randomness of $\ALG$.
When it comes to randomized online algorithms, this paper only uses
{\em oblivious} adversaries.
An oblivious adversary \cite{BEY98} knows the randomized algorithm but must 
choose the instance before the algorithm starts running
(and so, the oblivious adversary cannot
adapt based on the algorithm’s random choices). 
The paper uses an absolute multiplicative ratio with no additive constant\footnote{
The lower bound examples can be repeated so that the theorems hold
with an additive constant as well.
}.
We say that an algorithm is $c$-competitive if its competitive ratio is at most $c$.

Coming back to Span Minimization,
in the offline setting, where all the jobs are known in advance, minimizing the
span has a dynamic-programming-based polynomial-time exact algorithm \cite{KSST15} 
(the paper also considers machines of bounded capacity). 
Deterministic online algorithms appeared in  \cite{ren2017online} and \cite{koehler2017busy},
with the ``Doubler" algorithm of \cite{koehler2017busy} achieving the best known competitive ratio of $5$.
References \cite{Fang2013ApplicationAware,Fong2015ActiveTime,Fong2017ActiveTime} claim better  competitive ratios but there
are doubts about their correctness - see \cite{LT24} and the v1 version of \cite{albers2025onlinebusytimescheduling}.
As for lower bounds on the competitive ratio,
Liu and Tang \cite{LT24} (see also the v1 version of \cite{albers2025onlinebusytimescheduling}) claim the best known bound, which is $4$.

We are the first to study randomized online algorithms for this problem with arbitrary processing times. 
Our randomized algorithm achieves a competitive
ratio of $(1+e) < 3.72$ and uses a randomized scaling method inspired by \cite{SLLA25} 
(the earliest paper  that we know of with this method is \cite{CNS04}) and the
analysis of our algorithm uses ideas from \cite{koehler2017busy}.
Note that this randomized competitive  ratio is better than
any deterministic algorithm can achieve.

\begin{theorem} \label{thm: randomized}
In the oblivious adversary model,
there is a $(1 + e )$-competitive randomized  online algorithm for Span Minimization,
with running time $O(n \log n)$, where $n$ is the total number of jobs.
\end{theorem}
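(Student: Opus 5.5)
I will randomize the thresholds of the Doubler algorithm of \cite{koehler2017busy}. The algorithm keeps every job pending until its \emph{starting deadline} $d_j-p_j$. When some pending job $j$ reaches its starting deadline, it becomes the \emph{flag} job. The algorithm then opens the machine on the interval $[d_j-p_j,\; d_j-p_j+\alpha\, p_j)$ and starts there every pending job that fits inside this interval. The multiplier $\alpha$ is random. I draw one value $u$ uniformly from $[0,1)$ at the very start, fixed before the adversary's instance matters. I then restrict all ``sizes'' to the geometric grid $\{e^{k+u}: k\in\ZZ\}$. Concretely, the flag's window is extended to length $e^{k+u}$, where $k$ is the smallest integer with $e^{k+u}\ge p_j$. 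Any pending job whose processing time is at most the extended length is started at the flag's start time. A job arriving during an open interval that fits before it ends is started immediately.

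\textbf{Step 1: bound the cost in terms of flag jobs.} Every job is either started inside an open interval or becomes a flag, so feasibility is clear. The algorithm's span is at most $\sum_{\text{flags } f} e^{k_f+u}$. The grid rounding guarantees $e^{k_f+u}=\beta_f\, p_f$, where $\beta_f\in[1,e)$. Taking expectation over $u$, the ratio $\beta_f$ is distributed as $e^{U}$ with $U$ uniform on $[0,1)$. Hence $\mathbb E[\beta_f]=\int_0^1 e^x\,dx=e-1$.

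\textbf{Step 2: a charging argument in the style of \cite{koehler2017busy}.} Flags are pairwise ``incompatible'': when a later flag $g$ was pending at an earlier flag $f$'s start, $g$ did not fit, so $p_g > e^{k_f+u}$. Consequently, among flags whose windows overlap a common point, the rounded lengths strictly increase along the chain, on the grid. This gives a geometric series bounded by a factor $e/(e-1)$ times the largest rounded length. I will charge each flag $f$ to $\OPT$'s busy interval containing $I^*_f$, split into two parts:
\begin{itemize}
\item the part of $\OPT$ lying inside $f$'s own window, which contributes the ``$1$'';
\item a geometric sum over a chain of flags, which contributes the ``$e$''.
\end{itemize}
The target inequality is $\mathbb E[\ALG]\le (1+e)\,\OPT$. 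It should come from combining $\mathbb E[\beta]=e-1$ with the geometric factor $e/(e-1)$, giving $e$, and adding $1$ for the flag's own processing interval. The plan is to fix an $\OPT$ busy component $C$, consider the flags charged to $C$, and prove that their expected total length is at most $(1+e)\lambda(C)$. I will integrate over $u$ only at the end, so that the deterministic structural bound holds pointwise for every $u$.

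\textbf{Step 3: running time.} Keep pending jobs in a priority queue by starting deadline and in a balanced tree by processing time. Then each flag event pops the jobs that fit in $O(\log n)$ per job, for $O(n\log n)$ total.

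The main obstacle is Step 2. I must make the pointwise charging lose nothing beyond what the averaging over $u$ recovers. A naive Doubler analysis gives $5$, with rounding factor $2$, geometric factor $2$, plus $1$. I need to show that with base $e$ and a random offset, the worst case over $p$ of $\mathbb E[\text{rounded}/p]$ times the chain sum equals exactly $e$. I would first try to charge each flag to the length of its own rounded interval on $\OPT$'s side and integrate directly over $u$, checking that $\mathbb E\big[\sum_{i\ge0} e^{-i}\beta\big]=e$. I will also verify that the base $e$ minimizes $\frac{b-1}{\ln b}\cdot\frac{b}{b-1}=\frac{b}{\ln b}$, which confirms this choice of base.
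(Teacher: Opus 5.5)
Your algorithm is the paper's and Step 3 is fine, but Step 2 is only a plan, and the per-component inequality it aims at is false, so there is a genuine gap. Take unit rigid jobs ($d_j=r_j+1$) released at times $0,\delta,2\delta,\dots,\frac12$, so that $\OPT$ is the single interval $C=[0,\frac32)$. For an offset $u\in(0,1)$ every flag interval has length $e^u$. A job released at $r$ fits into the latest flag interval $[s,s+e^u)$ iff $r\le s+e^u-1$. Hence consecutive flags start at most $e^u-1+\delta$ apart, there are at least $\frac{1/2}{e^u-1+\delta}$ flags, and all of them are charged to $C$. Since $e^u-1\le 2u$ on $[0,1]$, their expected total length is at least $\frac12\int_0^1\frac{du}{2u+\delta}=\frac14\ln\frac{2+\delta}{\delta}$, which is unbounded as $\delta\to 0$. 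Yet the algorithm's actual span is exactly $\frac32$ (the jobs are rigid), and even the union of the flag intervals has measure below $\frac12+e$. So you cannot bound the cost by $\sum_f e^{k_f+u}$ and then bound that sum per component.

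The same example refutes your chain claim: all windows contain $[\frac12,1)$, yet all flags have the same rounded length. Your incompatibility argument only covers a later flag that was \emph{pending} when the earlier flag interval opened, not one released during that interval that fails to fit. The example also shows that ``$\mathbb{E}[\beta_f]=e-1$ for each flag'' cannot be summed over flags, because which jobs become flags depends on $u$. Here flags are plentiful exactly when $\beta=e^u$ is close to $1$.

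What is missing is the paper's split of the flags relative to the maximal intervals $C_i$ of $\OPT$. Flags with $RI(j)\subseteq C_i$ for some $i$ are paid for through the measure of their \emph{union}, which is at most $\opt$; this is the source of the $1$. In the example, all flags but at most one are of this kind. Every other flag is assigned to the $C_i$ in which $\OPT$ runs it. Its interval starts no earlier than the left end of $C_i$ and is not contained in $C_i$, so it extends past the right end $R_i$. Now let $j_{h-1},j_h$ be consecutive such flags of the same $C_i$. If $j_h$ had been released at or after the start of $RI(j_{h-1})$, then $[r_{j_h},R_i)\subseteq RI(j_{h-1})$ would contain its $\OPT$ placement, and $j_h$ would have fit on arrival. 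So $j_h$ was pending, $p_{j_h}>\lambda(RI(j_{h-1}))$, and the grid gives the factor $e$ between consecutive lengths.

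Finally, bound the last flag $j_s$ of the group pointwise by $\hat\lambda_i$, the grid-rounding of the \emph{fixed} number $\lambda(C_i)\ge p_{j_s}$. Because $\lambda(C_i)$ does not depend on $u$, the identity $\mathbb{E}[\hat\lambda_i]=(e-1)\lambda(C_i)$ is legitimate. This gives $\frac{e}{e-1}(e-1)\lambda(C_i)=e\,\lambda(C_i)$ per component and $(1+e)\opt$ in total.
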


We note that both this algorithm and the previously known $5$-competitive deterministic algorithm of \cite{koehler2017busy}
have an easy implementation to run in $O(n \log n)$ time (details in Section \ref{s_alg}).
As a result, these online algorithms have some value
as approximation algorithms, as they are much faster than the dynamic program of \cite{KSST15}\footnote{
We also have code from ChatGPT for both our algorithm
and the exact algorithm of \cite{KSST15}. 
The C++ code is available at https://github.com/gcal9999/Span-Minimization.git. 
Within 30 seconds, the randomized online algorithm can solve random instances with ten million jobs,
versus two thousand jobs for the exact algorithm.
The approximate solution is $18\text{--}33\%$ bigger 
(we only checked ten instances; this is not a simulation study).
}
. In Appendix \ref{a_tight} we show that the $1+e$ bound is tight for our 
randomized algorithm.

Many research works deal with Busy Time on bounded capacity machines, as we mention later.
Below, $g$ is the maximum number of jobs that can be scheduled at the same time on a machine.
We assume an unlimited number of identical machines.
The algorithm must decide for each 
job $j$  an interval $I_j$ of length  $p_j$ contained in $W_j$ during which job $j$
is to be processed, and assign (schedule) the job to one machine.
At any time, a machine  is {\em busy} if there is at least one job scheduled on it.
The objective of Busy Time scheduling is to minimize the total busy time over all time and machines.

Shalom et al. \cite{DBLP:journals/tcs/ShalomVWYZ14} proved that no deterministic online
algorithm can achieve a competitive ratio better than $g$  for 
Busy Time in this setting.
In fact, this lower bound holds even if there is no flexibility 
in the {\em starting time} (that is, $d_j-p_j=r_j$)
for every job $j$; such jobs are called {\em rigid}.
One way to mitigate this strong lower bound is to allow {\em lookahead}.
With a lookahead of $\ell$, job $j$ is revealed at time $r_j-\ell$.
Let $p_{max} := \max_{j}(p_j)$. 
As the lookahead will be a function of $\px$,
we assume that $\px$ is known to the algorithm from the start.
Koehler and Khuller \cite{koehler2017busy}
obtained a $12$-competitive algorithm with a lookahead of $2p_{max}$.
The v2 version of \cite{albers2025onlinebusytimescheduling}
claims a $9$-competitive online algorithm with only $p_{max}$-lookahead,
but their proof is incorrect,
as we show in Appendix \ref{a_albers} referenced from Section \ref{s_cap} (and if it can be corrected, we would get a randomized $(10/3 +e) < 6.06$-competitive algorithm).
However, a new algorithm inspired by their method is correct and,
with our better analysis, we obtain in Subsection \ref{ss_3tracks}:
\begin{theorem}
\label{t_cap}
    With $\px$-lookahead, there exists a deterministic 
    online algorithm for Capacitated Busy Time
    with competitive ratio of at most $8$, and
    there exists a randomized online algorithm for Capacitated Busy Time
    with competitive ratio of at most $4+e < 6.72$, against an oblivious adversary.
    When the jobs are rigid, the competitive ratio is $4$.
\end{theorem}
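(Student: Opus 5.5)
The plan is a two-stage reduction: first build a single-machine, unbounded-capacity schedule online, then turn it into a capacitated multi-machine schedule. The lookahead is what makes the second stage possible.

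\textbf{Stage 1: fixing start times.} With $\px$-lookahead, every job $j$ is known by time $r_j-\px$, so its start time can be committed before $r_j$. I would split time into tracks (say three interleaved families of intervals), so that jobs committed in one track never interact with jobs of another track. Inside a track, start times would be chosen by a Doubler-like rule for the deterministic algorithm and by the randomized scaling rule of Theorem~\ref{thm: randomized} for the randomized one. The aim is an unbounded-capacity schedule whose span is at most $c$ times the unbounded-capacity optimum. Here $c=5$ for the Doubler-type rule and $c=1+e$ for the randomized rule, with expected span used in the randomized case.

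\textbf{Stage 2: assigning machines.} Once start times are fixed, the jobs become rigid, and I would assign them to machines with a greedy First-Fit rule that handles jobs in order of start time. I would then use two standard lower bounds on $\OPT$: the span bound, and the mass bound $\sum_j p_j/g$. The target is a bound of the form
\[
\ALG \le 3\cdot\frac{\sum_j p_j}{g} + \Span(\text{stage 1}),
\]
or a finer per-track version of it. Combining with stage 1 would give
\[
\ALG \le 3\,\OPT + 5\,\OPT = 8\,\OPT \quad\text{and}\quad \mathbb{E}[\ALG] \le 3\,\OPT + (1+e)\,\OPT = (4+e)\,\OPT .
\]
For rigid jobs stage 1 costs exactly the rigid span, at most $\OPT$, so the ratio would be $3+1=4$.

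\textbf{Main obstacle.} The difficult step is the machine-assignment accounting in stage 2. I would charge each busy interval of machine $k\ge 2$ to jobs blocking it on machine $k-1$. This argument must give constant $3$ (not $4$ or more), and that requires the track structure so that jobs of length at most $\px$ overlap in a controlled way. The charging must also survive when the start times are random. Because the constants add up exactly to $8$ and $4+e$, I am unsure whether this is the right decomposition. The alternative is $1\cdot\text{mass}$ plus a multiple of the span, and I would check which version matches once the tracks are fixed. The randomized bound also needs care. Linearity of expectation applies only if the stage-2 bound holds pointwise for every realization of the start times, so I would prove it in that pointwise form and then take expectations.
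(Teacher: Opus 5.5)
Your accounting matches the paper's. The first machine costs at most the span of the stage-1 schedule, and all other machines together cost at most $3v(J)/g\le 3\opt_c$. This gives $8$, $4+e$, and $4$ for rigid input.

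Stage 1 needs no interleaved families. The paper simply simulates the span algorithm one $p_{max}$-block ahead: its starts in $[kp_{max},(k+1)p_{max})$ depend only on jobs released before $(k+1)p_{max}$, and all of these are visible at time $kp_{max}$. Paying the span term once per family is exactly what inflates Theorem~\ref{t_l} to $4+2e$.

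\textbf{The gap is in stage 2.} First-Fit in order of start time does not achieve your target (stage-1 span plus $3v(J)/g$), even for rigid jobs. Take $g\ge 5$, with $g$ jobs of length $L$ and $g(g-1)$ jobs of length $\epsilon$. Give them distinct start times in $[0,\epsilon/2)$, in the order one long, $g-1$ short, one long, $g-1$ short, and so on. When each new long job starts, every earlier machine already has $g$ active jobs. So every long job opens a new machine, and First-Fit pays about $gL$, while the span, $v(J)/g$, and $\opt_c$ are all about $L$.

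This is not an artifact of the example. First-Fit by start time never uses the lookahead, so it is an online algorithm for rigid jobs, and the lower bound of $g$ by Shalom et al.\ applies to it. Your blocking charge is the First-Fit analysis of Flammini et al. That analysis needs jobs processed in nonincreasing order of length, so that the $g$ blockers are at least as long as the blocked job. A $p_{max}$-lookahead does not allow that order, and in the example long jobs are blocked by short ones.

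\textbf{The missing idea is a nested-span structure in place of First-Fit.} The paper keeps an ordered list of $3$-Tracks $T_1,T_2,\ldots$ with the inclusion property $\Span(T_i)\subseteq\Span(T_{i-1})$. A $3$-Track is a job set with at most three jobs active at any time, so it splits into three tracks.
\begin{itemize}
\item In each block, LongChain repeatedly extracts a chain $P$ (at most two jobs active at any time) whose span equals that of the block jobs not yet placed. $P$ is merged into the next $3$-Track.
\item Whenever four jobs would be active, the active job of $P$ with the earliest deadline is dropped back into the pool. This does not change the span: up to that time the dropped job is covered by the two jobs of $T_i$, and afterwards by the other chain job, which has a later deadline.
\item The overlap is capped at four by the invariant that each $T_i$ has at most two jobs active at time $kp_{max}$. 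This holds because only the previous block's chain can reach the boundary, since $p_j\le p_{max}$.
\end{itemize}
Machine $h$ then receives tracks $(h-1)g+1,\ldots,hg$. By inclusion, $\lambda(\Span(B_i))\le\lambda(\Span(T_q))\le v(T_q)$ for every $T_q$ that contributes to the full bundle $B_{i-1}$. Hence $g\,\lambda(\Span(B_i))\le\sum_q D(i-1,q)\,v(T_q)$, where $D(i-1,q)\le 3$ counts the tracks of $B_{i-1}$ taken from $T_q$. Summing over $i>1$ gives $3v(J)/g$.

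So the $3$ is the number of tracks per $3$-Track, not a blocking constant. The bound also holds for every realization of the stage-1 schedule, so the expectation step you were concerned about is immediate.
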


Also, using our algorithm from Theorem \ref{thm: randomized} and a variant of the \cite{koehler2017busy} algorithm
we obtain in Section \ref{s_cap} 
a $\left( 2 + (l+1)(e+1)\right)$-competitive algorithm with $\px/l$-lookahead, for any integer $l \geq 1$.
Proposition 5  of
\cite{koehler2017busy} provides a $\sqrt{2}$  lower bound on the competitive ratio of deterministic algorithms
with a $p_{max}$-lookahead. Theorem \ref{thm: lb_nested_deterministic}, below, improves this lower bound to 4 since the case $g=n$ is the same as minimizing span.

We also prove a lower bound of $e$ on the competitive ratio of any randomized algorithm versus an oblivious adversary.
We show that this lower bound holds even if aborting and restarting a job from scratch is allowed.
To be precise, in the restart model, aborted work is lost, aborted execution still
        contributes to span, and the job must later complete in one uninterrupted attempt.
We assume that an online algorithm can only do finitely many restarts.
\cite{LKT26} demonstrates that allowing restarting jobs does produce a better competitive ratio in the case of  uniform processing time jobs;
in general restarting can bring benefits in online scheduling \cite{AmouzandehS26}.
Our lower bound comes from a ``reduction" from the Button problem of
Shin et al. \cite{SLLA25}, 
who show that no randomized online algorithm can achieve a competitive ratio better than $e$ for this
very basic Button problem. 
We say that two jobs are {\em nested} if one job's window is a subset of the other job's window.
We say that two jobs are {\em w-disjoint} if their windows are disjoint.

\begin{theorem} \label{thm: lb_nested_randomized}
For any $\epsilon > 0$,
no randomized online algorithm can have competitive ratio $e - \epsilon$ for
Span Minimization, against an oblivious adversary.
This holds even when every two jobs are nested or w-disjoint\footnote{The ``nested or w-disjoint" instance class is also called {\em laminar}, for example in \cite{CL19}.},
 and even when the algorithm is given a lookahead that is a function of $p_{max}$ and is allowed to abort and restart jobs.
\end{theorem}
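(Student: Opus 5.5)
The plan is to reduce from the Button problem of Shin et al.\ \cite{SLLA25} and invoke Yao's principle implicitly through their result. In the Button problem, an adversary picks an unknown horizon $T$ (from a distribution whose ratio structure forces the bound $e$). The algorithm must ``press the button'' at some time $t$ and then pays $\max(t, T)$, or some equivalent cost; the offline optimum pays $T$. Shin et al.\ show no randomized algorithm beats $e$ against an oblivious adversary. I would first recall the precise form of their cost model and lower-bound family. I would then build, for each parameter $T$ in their support, a laminar instance $J_T$ of Span Minimization such that two things hold: $\opt(J_T)$ is proportional to $T$, and any online schedule on $J_T$ can be converted into a Button strategy whose cost is at most the span (up to a $1+\epsilon$ factor).

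The natural construction is a long ``anchor'' job together with nested ``revealing'' jobs.
\begin{itemize}
\item At time $0$ a job with tiny processing time $\delta$ and a huge window $[0,L)$ arrives. It is the outermost window, so laminarity is automatic as long as every later window is nested inside it or disjoint from previous ones.
\item Then, at times $\delta, 2\delta, \dots$ up to the adversary's hidden $T$, rigid unit-scale jobs arrive whose windows are successively nested inside $[0,L)$ and disjoint from one another.
\item After time $T$, the adversary stops and finally releases a job of length $T$ whose window forces it to overlap nothing already running.
\end{itemize}
The intent is that the algorithm's decision of when to schedule the flexible content (the ``press time'') is exactly the Button decision. Scheduling early wastes span comparable to the elapsed time. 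Waiting until after the horizon costs an additional $T$ that the optimum avoids by aligning everything. I would tune the lengths so that $\opt = T(1+O(\epsilon))$ and the algorithm's span is at least the Button cost of the induced press time. This yields $\mathbb E[\ALG(J_T)]/\opt(J_T) \ge e - \epsilon$ for some $T$ in the support.

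Next I handle the strengthenings. For lookahead that is a function $f(p_{max})$, I rescale time: the arriving jobs are made small relative to the spacing, or the whole instance is stretched by a factor $M$ while $p_{max}$ stays bounded. The lookahead then reveals only a vanishing prefix of the future, and this contributes only an additive $\epsilon$. Alternatively, I prepend a shift by $f(p_{max})$ so that the future beyond the window remains hidden. For restarts, I argue that aborted executions still count in the span. The final uninterrupted execution of the critical job therefore still defines a press time, and the span is at least the cost of that final attempt. Because the number of restarts is finite, the last attempt is well defined, so restarts only add cost.

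The main obstacle is the gadget itself. I must design laminar windows so that every online schedule is dominated by a Button strategy, meaning the span cannot be reduced by splitting work cleverly across the nested jobs, while $\opt$ stays within $1+\epsilon$ of the Button optimum. My first attempt would be a single flexible job of processing time $1$ and window $[0,1+T_{\max})$, together with a stream of tiny rigid jobs at increasing positions. The algorithm must either cover the rigid jobs or start the flexible job, and I would check whether the resulting cost function exactly matches the rent-or-buy-like Button cost, adjusting the gadget until it does.
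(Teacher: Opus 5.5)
Your high-level plan matches the paper: reduce from the Button problem of \cite{SLLA25}, use large gaps against lookahead, and use the final completed execution against restarts. But the core of the argument is missing, and the parts you do specify would not work. First, the Button problem is not a press-time problem with cost $\max(t,T)$. With that cost, pressing at $t=0$ pays exactly $T$, so the ratio is $1$. It is online bidding: the prices $1=b_1\le\cdots\le b_m$ are fixed, and the algorithm presses an increasing sequence of indices. It pays the \emph{sum} of the pressed prices until it first presses an index $\ge j^*$, while the optimum pays $b_{j^*}$. The bound $e$ comes from this sum-of-increasing-guesses structure. So the gadget must force the algorithm into successive commitments of growing length, each wasted if too short. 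Neither of your gadgets does this. In the first, the rigid jobs and the final job released after $T$ cost the same for $\opt$ and for any algorithm. The only flexible job there has length $\delta$, so the ratio tends to $1$. In the second, let $R$ be the rigid span. Running the unit flexible job at its starting deadline costs at most $R+1\le 2\max(R,1)\le 2\,\opt$, so that gadget can never exceed $2<e$.

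The missing construction is the following. One Button instance is encoded as a mega-release: $m$ jobs released simultaneously, with $p_i=b_i$ and increasing deadlines separated by gaps larger than both $b_m$ and the lookahead $f(b_m)$. A maximal interval of $\cA$ of length $\ell$ in which a job of the mega-release completes is read as pressing button $\max\{i: b_i\le\ell\}$. Since job $j^*$ must finish by its deadline, $\cA$'s cost in that phase dominates the Button cost with first target $j^*$. But the optimum must also serve jobs $j^*+1,\ldots,m$, at cost up to $b_m\gg b_{j^*}$. That makes a single mega-release useless: $\opt\ge b_m$ there, and $\cA$ matches it with one interval of length $b_m$. Your single-instance-per-$T$ framing has no mechanism to deal with this.

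The idea you lack is amortization by nesting. The adversary makes $N$ mega-releases, placing mega-release $q+1$ between the deadlines of jobs $j^*_q$ and $j^*_q+1$ of mega-release $q$. Then every leftover job of every phase has a window containing one common final interval of length $b_m$, so $\opt\le\sum_q b_{j^*_q}+b_m$. Since each $b_{j^*_q}\ge 1$, choosing $N\ge 8b_m/\epsilon$ makes the additive term negligible. The same nesting is what makes the instance laminar. Two further ingredients are needed. The list $b$ from \cite{SLLA25} must be independent of the algorithm, so that $p_{max}=b_m$, and hence the lookahead $f(b_m)$, is the same fixed quantity in every phase. And each $j^*_q$ must be chosen from the \emph{distribution} of $\cA$'s behaviour on mega-release $q$, given the already fixed mega-releases $1,\ldots,q-1$, so that the whole instance is fixed before $\cA$ runs.
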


Basically the same reduction also gives 
the lower bound of $4$ previously obtained by \cite{LT24}, and we note that this lower bound  
 also holds even if restarting a job is allowed. Conceptually, there are
 advantages to using a reduction, as it separates the proof of the lower bound of $4$
 into two parts that can be used separately (and indeed, we do use one such part for the randomized lower bound).
 We also believe that using this reduction makes understanding the proofs of the lower bounds easier,
 so we decided  to present the following result:

 \begin{theorem} \label{thm: lb_nested_deterministic} [basically, Theorem  2.6 of \cite{LT24}]
For any $\epsilon > 0$,
no deterministic online algorithm can have competitive ratio $4 - \epsilon$ for Span Minimization.
This holds even when every two jobs are nested or w-disjoint,
 and even when the algorithm is given a lookahead that is a function of $p_{max}$ and is allowed to abort and restart jobs.
\end{theorem}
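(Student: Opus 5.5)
The plan is a reduction from a deterministic online ``doubling/Button''-type game whose optimal deterministic ratio is known to be $4$. In that game an unknown target $T$ is revealed only implicitly, and the player must commit to guesses $x_1<x_2<\dots$, paying $\sum_{i\le k} x_i$ where $x_k$ is the first guess with $x_k\ge T$. The classical fact is that the adversary forces ratio $4-\epsilon$; we do not reprove it. Our job is to embed this game into laminar Span Minimization instances, so that any deterministic span algorithm (even with lookahead depending on $p_{max}$, and even with restarts) yields a deterministic guessing strategy of the same cost ratio.

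\textbf{Gadget.} Use a long ``anchor'' job $A$ with tiny processing time and a huge window. Inside that window, place a sequence of nested jobs whose windows shrink toward a common right endpoint. The adversary keeps releasing, at appropriate times, short rigid jobs near the end. This forces a decision on when $A$ is processed. Concretely, job $j_i$ has window $[t_i, D)$ and processing time $p_i$. Consider the moment an algorithm starts a job of length $x$ at time $s$. The adversary then either stops, or releases a new nested job that cannot overlap the already used interval and whose optimal placement is far from it. This converts the algorithm's committed starts into guesses $x_1,x_2,\dots$. The optimum, knowing the final job set, aligns everything inside one interval of length about $T$ (the last processing time). Its cost is therefore $T$, while the algorithm has paid roughly $\sum x_i$ in span, because each earlier start is wasted.

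\textbf{Handling restarts and lookahead.} Aborted execution still counts toward span, so a restart is just another paid guess; with finitely many restarts this still maps to a finite guess sequence. For lookahead $\ell=f(p_{max})$, we would scale the whole gadget so that all windows are much longer than $\ell$. Releases then remain informative only $\ell$ in advance, which is negligible relative to the time scale of decisions, and laminarity is preserved because every new window is nested in the previous one or disjoint from it.

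\textbf{Main obstacle.} The hard step is the exact accounting showing the algorithm's span is at least $\sum_{i\le k}x_i-o(\mathrm{OPT})$ while $\mathrm{OPT}\le T(1+o(1))$. This requires that distinct committed intervals of the algorithm genuinely cannot overlap each other. To achieve this, I would first try placing each new job's window to the left of, or disjoint in time from, the interval the algorithm just committed to. I would also use geometric time scales so that boundary effects are an $\epsilon$-fraction, then take $\epsilon\to0$.
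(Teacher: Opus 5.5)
Your proposal names the right source (a Button/online-bidding game with deterministic ratio $4$), but it never supplies a working embedding, and the embedding is the whole content of this theorem. In Span Minimization the algorithm does not choose a length $x$. It only chooses when to run jobs whose lengths the adversary has fixed, so ``guesses'' can only arise if several candidate lengths are pending at the same time.

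The concrete gadget you describe cannot force any waste. If the flexible windows are $[t_i,D)$ with a common right endpoint, consider the algorithm that starts every job at its starting deadline $d_j-p_j$. It executes the nested suffixes $[D-p_i,D)$, whose union has length $\max_i p_i\le\opt$. The anchor adds only $p_A$, and the rigid jobs are paid by every solution, so this algorithm costs at most $2\,\opt+p_A$ on your instances. Against it, your adversary move is not even available, since every nonempty window $[t,D)$ meets $[D-p_i,D)$. Your ``Main obstacle'' paragraph leaves open precisely the step that would have to repair this.

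The missing idea is the paper's \emph{mega-release}. It consists of $m$ jobs released simultaneously with processing times $b_1<\dots<b_m$, taken from one Button list that is hard for \emph{every} deterministic algorithm (the strong form of the bound from \cite{SLLA25}). Their deadlines are spaced $\beta b_m(2+m)^{N-q}$ apart. The final maximal intervals in which $\cA$ completes these jobs define a deterministic Button algorithm $\cA_q$; this is also how restarts are absorbed. The adversary simulates $\cA_q$ and picks $j^*_q$ with $\cost(\cA_q,b,j^*_q)>(4-\eps/4)b_{j^*_q}$. It then nests the next mega-release after the deadline of job $j^*_q$ and before that of job $j^*_q+1$.

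Your claim $\opt\le T(1+o(1))$ fails for a single such gadget. The jobs above the target were already released and cannot be withdrawn, so every solution costs at least $b_m=\px$, which can dwarf $b_{j^*_q}$. The paper sends all such leftover jobs to one common final interval of length $b_m$. It amortizes that interval over $N\ge 8b_m/\eps$ nested phases, each costing the offline solution only $b_{j^*_q}$. This is why the list must be independent of the algorithm: it is released before $\cA$'s response is known, and $N$ is chosen relative to $b_m$.

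Likewise, for lookahead you cannot ``scale the whole gadget'', because that scales $\px$ and hence $f(\px)$. The paper keeps the processing times fixed, so $\px=b_m$, and stretches only the gaps, using $\beta=(2+m)(2+f(b_m))$.
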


 This is not a new result; we just observe it has extra properties.
As for any reduction, we give a construction and related proofs.
 After we obtained our reduction, we noticed that,
 if used for the lower bound of $4$,
 our construction differs in some respects from 
 the proof we have seen in a personal communication
 of a full version of \cite{LT24},
 but it is ``essentially the same"  as 
 the construction suggested in the v1 version of \cite{albers2025onlinebusytimescheduling}
(their Section 2.2 is somewhat vague and lacks some arguments, notably how the adversary schedules all the jobs).
 Our proof also 
 underlines the usefulness in online scheduling of the Button problem of \cite{SLLA25}. 

A set of jobs is called {\em agreeable}\footnote{Also called {\em compatible},
for example in \cite{Lawler1991ADP}, or {\em proper}, for example in \cite{CL19}.
}
if, for any two jobs $j$ and $j'$ in the set, if
job $j$ arrives before job $j'$,
then the deadline of job $j'$ cannot be before that of $j$.

Agreeable instances can be easier, and indeed 
Albers and van der Heijden \cite{albers2025onlinebusytimescheduling} (v2) obtain a $2$-competitive
online deterministic algorithm  (also, their  Algorithm 2 does not use restarts)
for agreeable instances of Span Minimization,
which is optimal in light of the theorem below.
With agreeable jobs, faster (offline) exact algorithms
based on dynamic programming were given in \cite{Fong2017ActiveTime}.

In the setting where jobs have uniform processing times, and
a job that is started by the algorithm must be finished,
we show that no deterministic algorithm can have competitive ratio better than 2, even when the
jobs are agreeable.
We obtained and communicated this result in 2024, without input from  \cite{LKT26},  which also claims this.
\cite{albers2025onlinebusytimescheduling} (v2 only, from 2025) also claims this lower bound of $2$ in their Theorem 1,
but their method is vague and seems to have the adversary change the deadlines of some of the jobs
after the online algorithm has scheduled these jobs. Our proof is inspired by the proof of Theorem 3 of \cite{CDKZ24}.
It is straightforward to verify that Doubler is $2$-competitive for uniform processing times
(agreeable instance or not).

\begin{theorem} [Also Section 2.2 of \cite{LKT26}]
    \label{t_2}
    For any $\epsilon > 0$,
    no deterministic online algorithm can have competitive ratio $2 - \epsilon$ for
     Uniform Processing Time  Span Minimization with agreeable jobs, assuming restarts are not allowed.
\end{theorem}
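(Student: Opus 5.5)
We plan an adaptive adversary argument against an arbitrary deterministic online algorithm $\ALG$. All jobs have processing time $p=1$, and they are released in order of nondecreasing release time and nondecreasing deadline, so the instance is agreeable. Since restarts are not allowed, once $\ALG$ starts a job at time $s$, the interval $[s,s+1)$ is irrevocably part of the span. The adversary uses this to force $\ALG$ to pay for two disjoint unit intervals, while $\OPT$ pays only about $1+\eps'$.

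\textbf{Construction.} At time $0$ release job $1$ with a very large window $[0,D)$, where $D$ will be chosen large. $\ALG$ must eventually start job $1$, at some time $s_1\le D-1$, and the adversary observes $s_1$ (the algorithm is deterministic).

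\textbf{Case 1: $\ALG$ starts job $1$ at $s_1$.} Immediately after, at time $s_1+\delta$ with small $\delta>0$, the adversary releases job $2$ with window $[s_1+\delta,\, D+\delta)$. This is agreeable, since its release and deadline both come after those of job $1$. If $\ALG$ schedules job $2$ inside $[s_1+\delta, s_1+1)$, the overlap with job $1$ is only partial, so the adversary continues. It releases job $3$ at the moment job $2$ starts, with a later deadline, and so on. Each new job's window begins strictly after the previous start, so $\ALG$ can never perfectly overlap a new job with an already-started one. In contrast, $\OPT$ knows all the jobs and can place all of them in one common unit interval near the end, say $[D-1+\text{small}, D+\text{small})$, which fits every window if the deadlines are pushed forward only slightly each time.

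\textbf{Charging and the main obstacle.} The crux is to control the increments. Suppose $\ALG$ starts job $k$ at $s_k$. Either $s_k-s_{k-1}$ is small, in which case $\ALG$ incurs only a small extra span but the adversary can repeat, or it is large, which adds a lot of span. Picking the shifts $\delta_k$ geometrically, the adversary keeps releasing jobs until $\ALG$'s cumulative span reaches about $2$. That threshold is reached once the start times have advanced by at least $1$ in total, because consecutive partially-overlapping unit intervals cover length $1+(s_k-s_1)$. Meanwhile $\OPT$ stays at $1+\sum\delta$, because all windows share the common final interval provided $D$ exceeds every start time by at least $1$.

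The obstacle is preventing $\ALG$ from simply waiting. If $\ALG$ delays every start until near the deadline, the adversary must still gain. For this we use the standard trick from the proof of Theorem 3 of \cite{CDKZ24}. Whenever $\ALG$ idles, the adversary releases a tight job forcing span now: a job with window $[t,t+1+\eps')$, agreeable because its deadline is chosen at least the previous ones. This forces $\ALG$ to open an interval, and then the adversary stops, having released further flexible jobs whose windows $\OPT$ aligns with that tight job.

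We expect the careful balancing between these two cases to be where the $2-\eps$ bound comes out. Either $\ALG$ commits early and pays for a second interval later, or it commits late and pays for the forced tight interval separately from the flexible jobs. In both cases we obtain $\ALG\ge 2-O(\eps')$ against $\OPT\le 1+O(\eps')$, and letting $\eps'\to0$ proves Theorem~\ref{t_2}.
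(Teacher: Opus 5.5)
There is a genuine gap. You aim for a single instance with $\opt\le 1+O(\eps')$ and $\ALG\ge 2-O(\eps')$, and that outcome cannot be forced against every deterministic algorithm, so no balancing of your two cases can succeed. Consider the uniform-length Doubler: at the first starting deadline $t$ of a pending job, open $[t,t+2)$ and start all pending jobs at $t$; start each later arrival at its release time if it fits. Suppose $\opt<2$. Then the optimal execution intervals form one interval $[a,b)$ with $b-a=\opt$, so every job has $r_j\le b-1<a+1$ and $d_j\ge a+1$. Doubler's first interval opens at $t_0=\min_j d_j-1\ge a$. Every job runs in it starting at $\max(t_0,r_j)$, and the resulting span is $\max(1,\max_j r_j+1-t_0)\le\opt$. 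So this algorithm is optimal whenever $\opt<2$, and any lower bound of $2$ must use instances whose optimum grows.

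Your concrete devices fail accordingly. By agreeability, a job released after job $1$ has deadline at least $D$, so a ``tight'' window $[t,t+1+\eps')$ can only start at $t\ge D-1-\eps'$, where it is simply batched with job $1$. If $\ALG$ starts job $1$ at $D-1$, followers whose deadlines are pushed only slightly are essentially rigid, so $\opt$ pays the same increments $\delta_k$ that $\ALG$ does. Finally, it is $\ALG$, not the adversary, who decides when the chain of partial overlaps breaks.

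The missing idea, which is the paper's route, is to iterate and then average over two offline schedules. Followers of the current lead job get a far common deadline and are each released $1/M$ after the previous job's start. $\ALG$ must eventually stop overlapping, since otherwise it pays at least $M^2$ while $\opt\le M+1$. The job that breaks the chain becomes the next lead job, and its followers get the next far deadline. One round of this construction alone forces only about $3/2$: $\ALG$ can stop at block length $\alpha_1=2$, paying $3$ against $\opt\le 2$.

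After $M$ rounds, $\ALG$ pays $1+\sum_{i=1}^M\alpha_i$, and two offline schedules exist. The first reuses $\ALG$'s blocks for odd $i$, each stretched by $1/M$ to absorb the next lead job, and batches each even round's followers with the following lead job. The second does the same with the parities swapped. Their costs sum to at most $(1+3/M)$ times $\ALG$'s cost, so the cheaper one is about half of it, which yields the ratio $2$.
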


This theorem improves on the $\frac{1 + \sqrt{5}}{2} \approx 1.618$ lower bound 
of \cite{koehler2017busy,ren2017online} and the v1 version of  \cite{albers2025onlinebusytimescheduling}.

The following online scenario also makes sense: at release time, we know a job's processing time, but not the deadline.
The job's deadline is only revealed at the job's {\em starting deadline} $d_j-p_j$
(when the job becomes rigid).
All algorithms without restarts and with established guarantees mentioned above
(ours or from previous work) work in this {\em flexible-rigid} scenario. The method of Theorem \ref{t_2} can be adapted to
prove that restarts do not help in the flexible-rigid model:
\begin{theorem}
    \label{t_fs}
    For any $\epsilon > 0$,
    no deterministic online algorithm can have competitive ratio $2 - \epsilon$ for
     Uniform Processing Time  Span Minimization with agreeable jobs
     in the flexible-rigid model, even if restarts are allowed.
\end{theorem}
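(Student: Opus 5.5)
We would prove this with an adaptive adversary against a fixed deterministic algorithm $\ALG$, in the spirit of the proof of Theorem~\ref{t_2}. All jobs have $p_j=1$. The adversary uses the flexible-rigid feature to postpone every deadline decision until it has seen how $\ALG$ behaved. Only two rules constrain the deadlines:
\begin{itemize}
\item agreeability: deadlines are nondecreasing in release order;
\item a job's deadline must be fixed no later than that job's starting deadline.
\end{itemize}
Throughout, let $\delta,\eta>0$ be small parameters.

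\textbf{Construction.} Job $1$ is released at time $0$. Suppose $\ALG$ never starts it. Then the adversary gives it starting deadline $H$ for a huge $H$, and $\ALG$ pays exactly $1$, which is not helpful to $\ALG$ either way. So let $t_1$ be the first time $\ALG$ starts job $1$. Inductively, once $\ALG$ has started job $k$ for the last time before the stopping rule below, at time $t_k$, the adversary releases job $k+1$ at time $t_k+1-\delta$. It keeps all deadlines undecided and all starting deadlines in the future.

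Restarts never help $\ALG$ here. Any aborted attempt still contributes span, so we may measure from the latest (re)start. Since only finitely many restarts are allowed, $t_k$ is well defined.

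\textbf{Stopping rule.} Let $a:=t_k+1$. If $\ALG$ has not started job $k+1$ by time $a$, the adversary stops releasing jobs. It sets the starting deadline of job $k+1$ to $a+\eta$, so the deadline of job $k+1$ is $a+1+\eta$. Every earlier job receives deadline $a+1+\eta$ as well. This is legal for two reasons:
\begin{itemize}
\item Agreeability holds, because all deadlines are equal.
\item The earlier jobs' starting deadlines are $a+\eta$, which is later than the decision time $a$. So their deadlines are revealed only at their starting deadlines, as the model requires.
\end{itemize}
$\OPT$ processes all $k+1$ jobs together in $[a+\eta', a+1+\eta']$ for any $0\le \eta'\le\eta$; every window contains this interval, so $\OPT\le 1$.

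$\ALG$, on the other hand, is busy on $[t_1, a)$, which has length at least $1$. This is because consecutive starts satisfy $t_{i+1}\ge t_i+1-\delta$, so the jobs it runs chain together with gaps of at most $\delta$. Moreover, job $k+1$ must run in $[a, a+1+\eta]$, adding about $1$ more unit of span. Hence
\[
\ALG \;\ge\; (1-\delta)k + 1 - O(\delta k),
\]
and in particular $\ALG\ge 2-O(\delta)$ as soon as $k\ge 1$.

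\textbf{Non-stopping branch.} If $\ALG$ keeps starting each new job within the window $[t_k+1-\delta,\, t_k+1)$, the adversary stops after $N$ jobs. It gives all $N$ jobs a common large deadline, chosen at the last release time $t_{N-1}+1-\delta$, which precedes every starting deadline. Again $\OPT=1$, since all windows contain a common unit interval after the last release. Meanwhile $\ALG\ge (1-\delta)(N-1)+1$, because each new start is at least $1-\delta$ after the previous one. For $N\ge 2$ this already gives ratio about $2$, and larger $N$ only makes it worse for $\ALG$.

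\textbf{Main obstacle.} The delicate step is the bookkeeping in the stopping case. We must check two things. First, every earlier job's deadline is indeed revealed only at its own starting deadline, strictly after the adversary's decision, even for jobs that $\ALG$ may later abort and restart. Second, a restarting $\ALG$ cannot gain by rescheduling earlier jobs into the final interval. It cannot: their earlier executions already contributed span $\ge t_k+1-t_1\ge 1$. What remains is to let $\delta,\eta\to 0$, which gives the ratio $2-\epsilon$.
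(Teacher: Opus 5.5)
\paragraph{A gap in the proposal.} Your construction only works if every deadline can stay undecided until the stopping time $a$. In the flexible-rigid model the algorithm can force the adversary to commit simply by waiting.

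Take the algorithm that starts nothing until some job's starting deadline is revealed, and then starts all pending jobs. Against your adversary, job~1 starts exactly at its revealed starting deadline $H$. So $t_1=H$, and job~1's deadline is already fixed at $H+1=a$. Your stopping rule then assigns job~1 the deadline $a+1+\eta$, which is illegal. With the true deadlines, job~1 must finish by $a$ and job~2 is released at $a-\delta$. Hence every schedule costs at least $2-\delta$, while this algorithm pays $2$, so the ratio tends to $1$.

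Your case ``$\ALG$ never starts job~1'' is where this slips through. That forced start is then used as $t_1$ as if job~1's deadline were still free. Your claim that the earlier jobs' starting deadlines can be set to $a+\eta$ holds only for jobs the algorithm started voluntarily.

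The one-shot approach cannot be repaired: no instance with $\OPT=1$ hurts this algorithm. If all windows contain a unit interval $[s,s+1)$, every release is at or before $s$ and every starting deadline is at or after $s$. So at the first revealed starting deadline all jobs are pending, and they run in a single batch of cost $1$. Separately, releasing job $k+1$ at $t_k+1-\delta$, with $t_k$ the \emph{last} start of job $k$, is not an online rule once restarts are allowed: at that moment the adversary cannot know whether job $k$ will be aborted later.

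\paragraph{What is missing.} The paper uses a multi-round amortization.
\begin{itemize}
\item Lead job $i$ has a committed deadline $iM^2$, so waiting for it gains the algorithm nothing.
\item Only after the algorithm (re)starts lead job $i$ does the adversary release a train, one job every $1/M$ time units. The train's deadlines lie far in the future and are fixed only when the train breaks.
\item At the break, the prefix up to and including the first job not amalgamated with its predecessors gets deadline $(i+1)M^2$, and that job becomes lead job $i+1$. Later train jobs get $(i+2)M^2$, which keeps the instance agreeable.
\item Trains cut short by an abort of the lead job are collected into $X_i$.
\end{itemize}
The factor $2$ then comes not from $\OPT=1$. It comes from comparing the algorithm's cost $1+\sum_h\alpha_h$, paid on disjoint intervals, with two alternating offline solutions (odd $h$ versus even $h$), each costing roughly half of it.
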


Most related to this work is the recent work \cite{LKT26}. 
They demonstrate that randomization can be leveraged to break the deterministic competitive barrier of $2$ 
of Theorem \ref{t_2}. For uniform processing time jobs, they present
a randomized competitive upper bound of $\frac{1}{\ln{2}}\approx 1.443$
and a lower bound of $\frac{\sqrt{3}+1}{2}\approx 1.366$,
both against an oblivious adversary. 
In the deterministic setting, they show that by allowing job restarts,
one can achieve an optimal competitive ratio equal to $\frac{1 + \sqrt{5}}{2}$.
Their randomized algorithms without restarts remain valid in the flexible-rigid scenario,
while the algorithm with restarts does not
(which is expected in light of the above theorem).

\subsection{Related work}

When the jobs can be preempted, span can be minimized 
by the algorithm from Theorem 6 of \cite{CKM17}. It is straightforward
to adapt this algorithm to find an optimum also in the online setting.

In the setting of  bounded capacity machines, with an unlimited number of identical machines,
recall that $g$ is the maximum number of jobs that can be scheduled at the same time on a machine.
Chau and Li \cite{CL19} provide a survey. The problem becomes NP-hard \cite{winkler2003wavelength},
even if there is no flexibility for the jobs (i.e., for all $j$, $p_j = d_j - r_j$).
Approximation algorithms are presented in 
\cite{alicherry2003line,kumar2005approximation,FMMSSTZ10,KSST15,CKM17}.
Koehler and Khuller \cite{koehler2017busy} 
give an $O(\log \mu_p)$ approximation for general $g$,
where $\mu_p$ is the ratio of maximum to minimum processing time.
This implies a constant approximation ratio in the case of uniform processing times.
A constant ratio for uniform processing times is also obtained in \cite{CDKZ24}, even with heterogeneous machine types.
Also, with bounded capacity machines and inflexible jobs,
online algorithms have been published in \cite{azar2019tight,Kamali2015Efficient,Li2014Dynamic,RTLC17}.

With preemption and migration,   Chang et al.
\cite{CKM17} also obtain a $2$-approximation algorithm for bounded capacity Busy Time, 
building on their algorithm that minimizes the span. 
When $g$ is part of the input, this problem is also NP-hard as the reduction of \cite{CaoFLMRU22} works for Busy Time as well.
In this scenario, if we have $d$ types of resources and
$d$ capacity constraints (the same for all the machines),  \cite{sarpatwar2023preemptive} obtain an
$O(\log d \log^* \tau')$-approximation algorithm, where $\tau'$ is the total number of timeslots.

\section{The Randomized Online Algorithm}
\label{s_alg}

\begin{proof}[\emph{of Theorem \ref{thm: randomized}}]

The randomized algorithm chooses uniformly at random a number $y \in [0,1)$ (this is the only random choice).
Then it proceeds as follows.  Let $t$ be the current time.
First, process arrivals:
When job $h$ arrives at time $t$,
schedule it to start immediately into an already-activated interval if it
fits\footnote{formally, job $h$ fits in interval $I$ if $\lambda\bigl(I\cap W_h\bigr)\ge p_h$}.
Second, process starting deadlines:
Whenever a (released and) unscheduled job $j$ reaches its starting deadline 
(recall that this is $d_j - p_j$), we
activate/schedule an interval $I=RI(j)$ starting at time $d_j - p_j$
of length $e^{\eta+y}$,
where $\eta \in \ZZ$ is smallest such that $p_j \leq e^{\eta+y}$.
The job $j$ is called a {\em flag job}, as in previous works,
and let $F$ denote the set of flag jobs.
All the released and unscheduled jobs that fit
in $I$ are scheduled to start immediately in $I$.
Simultaneous job arrivals and simultaneous starting deadlines are resolved in an arbitrary manner (any deterministic rule will do).
Note that for two flag jobs $j$ and $j'$, we can have $\lambda \left( RI(j) \cap RI(j') \right) > 0 $.

Our algorithm can be implemented to run in $O(n\log n)$ time using two min-heaps over
the released-but-unscheduled jobs. 
The first is keyed by starting deadline
$d_j-p_j$ and determines the next flag event; the second is keyed by processing
time. 
For an arrival at
$t$ one has $d_h - t \ge p_h$ automatically, so $h$ fits into $[a,b)$ iff
$b \ge t + p_h$: maintain the running maximum right endpoint over activated
intervals, $O(1)$ per arrival. At a flag event activating an interval of length
$\ell$ at $t'$, every released unscheduled $h$ has $d_h - t' \ge p_h$, so
``fits'' collapses to $p_h \le \ell$: a priority queue keyed on $p_h$ gives
$O(\log n)$ amortized.

For the analysis, let $C_1, C_2, \ldots$ be the maximal intervals of $\OPT$, an optimum solution. 
Thus the objective of $\OPT$, which we denote by $\opt$, is $\sum_i \lambda(C_i)$.
The algorithm's output costs at most $\lambda \left( \bigcup_{j \in F} RI(j) \right).$
Following \cite{koehler2017busy}, we partition the flag jobs into two groups:  $F_{in}$ is the set  of {\em inside} flag jobs  
$j$ for which there exists $i$ with $RI(j) \subseteq C_i$ and $F_{out}$, the remaining flag jobs. 
It is immediate that 
\begin{equation}
    \label{e_F_in}
    \lambda\left( \bigcup_{j \in F_{in}} RI(j) \right) \leq \sum_i \lambda(C_i) = \opt.
\end{equation}

We further partition $F_{out}$ into sets $F_i$, where $F_i$ is the set of flag jobs
of $F_{out}$ that are scheduled by $\OPT$ in  interval $C_i$.
Consider now the jobs of $F_i$ (for some fixed $i$):  $j_1, j_2, \ldots, j_s$
(here $s = s(i)$), sorted in the order they are scheduled by the algorithm.
We claim that, for all $h = 2, \ldots, s$, we have  $\lambda(RI(j_h)) \geq e \cdot \lambda(RI(j_{h-1}))$
(\cite{koehler2017busy} has a similar situation, with successive such intervals doubling in size).

\begin{figure}[th]
\begin{center}\leavevmode%
\scalebox{1.01}{
  \includegraphics{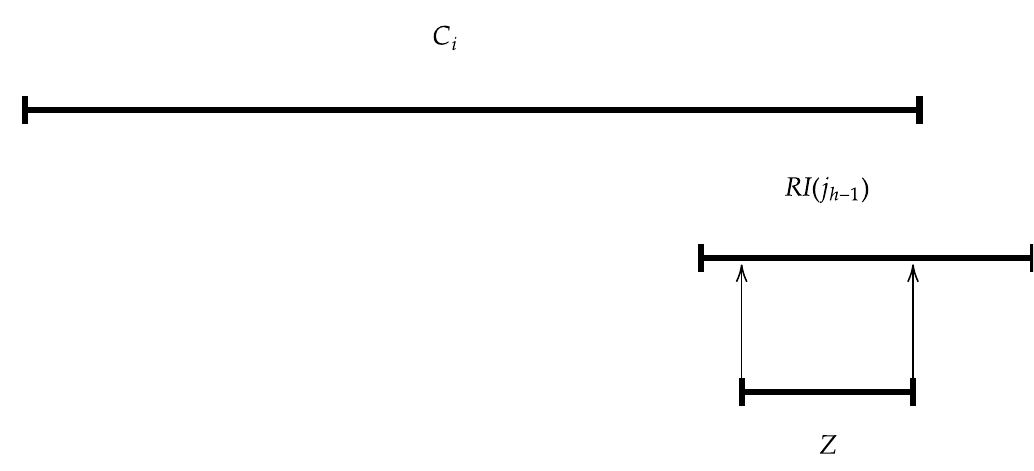}
}
\end{center}
\caption{Here, the interval $RI(j_{h-1})$  with left endpoint $d_{j_{h-1}} - p_{j_{h-1}}$ overlaps but is not included in $C_i$.
If $r_{j_h}$, the release time of flag job $j_h$, is after $d_{j_{h-1}} - p_{j_{h-1}}$,
then $r_{j_h}$ would be the left endpoint of the depicted interval  $Z$, and since job $j_h$ 
can be scheduled in $C_i$,  it can be scheduled in $Z \subseteq RI(j_{h-1})$, contradicting $j_h$ being a flag job.
}
\label{f_double}
\end{figure}

Indeed, note that the right endpoint of $RI(j_{h-1})$ is to the right of $R_i$,
the right endpoint of $C_i$,
since $j_{h-1} \not \in F_{in}$ because it is a flag job from $F_i$
(also, the jobs of $F_i$ can be scheduled inside $C_i$ and thus have a starting deadline at or after the left endpoint of $C_i$).
The left endpoint of $RI(j_{h-1})$ is $d_{j_{h-1}} - p_{j_{h-1}}$.
We cannot have $r_{j_h}  \geq d_{j_{h-1}} - p_{j_{h-1}}$, which we argue as follows:
$r_{j_h} \leq R_i$  (the right endpoint of $C_i$),
since $j_h$ can be scheduled in $C_i$,
so if $d_{j_{h-1}} - p_{j_{h-1}} > R_i$, we are done.
See Figure \ref{f_double} for an illustration of the next argument.
And if $d_{j_{h-1}} - p_{j_{h-1}} \leq R_i$, and $r_{j_h}  \geq d_{j_{h-1}} - p_{j_{h-1}}$, then,
since $j_h$ can be scheduled in $C_i$, we have that $j_h$ can be scheduled in $RI(j_{h-1})$ and 
as a result $j_h$ would not be a flag job.
On the other hand, with $r_{j_h} < d_{j_{h-1}} -  p_{j_{h-1}}$, and using that $j_h$ is scheduled as a flag job
after $j_{h-1}$, the only reason why $j_h$ is not scheduled
in $RI(j_{h-1})$ is that $p_{j_h} > \lambda(RI(j_{h-1}))$. 
Thus $\lambda\left( RI(j_h) \right) > \lambda(RI(j_{h-1}))$,
and by the way these
lengths are assigned by the algorithm (bucketing), 
$\lambda(RI(j_h)) \geq e \cdot \lambda(RI(j_{h-1}))$.

Based on this claim,
\begin{equation}
    \label{e_Ls}
    \sum_{h=1}^s \lambda(RI(j_h)) \leq
    \lambda(RI(j_s)) ( 1 + e^{-1} + e^{-2} + \ldots ) 
    = \frac{1}{1-e^ {-1}} \lambda(RI(j_s)) = \frac{e}{e-1} \lambda(RI(j_s)). 
\end{equation}

The inequality above holds 
for every realization of $s(i)$ and  of $j_{s(i)}$ (the last flag job of $F_i$).
Recall that $y$ is chosen at random by the algorithm uniformly from the interval $[0,1)$.
Let $\hat{\lambda}_i$ (a random variable) be $e^{\eta+y}$ where $\eta \in \ZZ$ is minimum such that $e^{\eta+y} \geq \lambda(C_i)$.
The following computation can be found in \cite{CNS04,SLLA25}.
Let  $x \in [0,1)$  be such that $\lambda(C_i) = e^{k_i+x}$, for integer $k_i$.
Then, if $y \geq x$, we have $\hat{\lambda}_i = e^{k_i+y}$, and if $y < x$, then $\hat{\lambda}_i = e^{k_i+1+y}$.
We have that the expectation of  $\hat{\lambda}_i$ is $\lambda(C_i)$ times
\begin{equation}
\label{e_integral}
    \int_{0}^{x}  e^{y+1-x} dy + \int_{x}^{1} e^{y-x} dy = e^{1-x} (e^x - e^0) + e^{-x}(e^1 - e^x) = e-1.
\end{equation}

Note that $p_{j_s} \leq \lambda(C_i)$ since $\OPT$ schedules $j_s$ in $C_i$ and therefore $\lambda(RI(j_s)) \leq \hat{\lambda}_i$ 
and this is valid for every realization of $s(i)$ and  of $j_{s(i)}$ (the last flag job of $F_i$).
Combining this inequality with Inequality (\ref{e_Ls}) and 
Equality (\ref{e_integral}), we obtain:
\begin{equation*}
    \mathbb{E} \big[ \sum_{h=1}^s \lambda(RI(j_h)) \big] 
    \leq \frac{e}{e-1}  \mathbb{E}[\hat{\lambda}_i] = e \cdot \lambda(C_i).
\end{equation*}
Summing this last inequality over $i$, and recalling Inequality (\ref{e_F_in}), we obtain:
\[
  \mathbb{E}\big[ \lambda\Big(\bigcup_{j\in F} RI(j)\Big) \big]
  \;\le\; \mathbb{E} \big[ \lambda\Big(\bigcup_{j\in F_{in}} RI(j)\Big) \big]
        + \mathbb{E} \big[ \sum_{j\in F_{out}}\lambda(RI(j)) \big]
  \;\le\;\opt+e\cdot\opt = (1+e) \opt.
\]
\end{proof}

The following instance is tight for the auxiliary variant that pays
for every activated  $RI(j)$, for $j$ a flag job, in full.
It does not establish tightness for the standard span implementation,
which pays only for the union of actual job executions.

Precisely, for any $\epsilon > 0$,
we can construct an example where the competitive ratio of the algorithm is at least $1 + e - \eps$, assuming that all the
activated intervals (the intervals $RI(j)$ for flag jobs $j$) 
are kept fully open by the algorithm. See Section \ref{s_concl} for discussion on 
the algorithm that keeps open only the intervals where a job is actually scheduled.

See Figure \ref{f_example} for an illustration.
Let $u$ and $w$ be (large) integers.
We have one interval $C = [0,uw)$ that is going to be the optimum solution.
A job $j$ is called {\em rigid}  if $d_j = r_j + p_j$, and then $I_j = [r_j,d_j)$.
We have $u$ rigid job groups,
each group with  $w-2$ rigid jobs of processing time $1$,
with the release time of the $i^{th}$ job in group $k$ being $(k-1)w + (i-1)$, for $1\le i\le w-2$ and $1\le k\le u$.  
We have one ``flexible" group of  $z :=  \lfloor \ln u \rfloor -1$  jobs, all with release time $0$.
The $h^{th}$ job of the flexible group has deadline $5uw(z +3 - h)$ 
and processing time $uw \cdot e^{1-h}$, for $1 \le h \le z$.
Note that all the flexible jobs have processing time at least $w$ and at most $uw$.
Note that all the jobs fit in the interval $C$, and thus $\opt$  is $uw$.

On the other hand, the randomized online algorithm will schedule all the rigid jobs at a cost
of at least $u(w-2)$ and does not schedule anything
in the nonempty intervals $[(k-1)w + w - 3 + e, kw)$, for $k =  1, 2, \ldots, u$. As such, none of the flexible
jobs, all with processing time at least $w$ and starting deadline bigger than $uw$, will be scheduled in $C$.
Moreover, the flexible jobs will reach their starting deadlines
in increasing order of their processing times, and none will fit in the interval where the previous one is scheduled.
As a result, each flexible job will be scheduled in its own  interval (pairwise disjoint intervals),
and the expected length of this interval is $(e-1)$ times the processing time of the flexible job,
from the computation of Inequality (\ref{e_integral}). With $u$ large enough, Inequality (\ref{e_Ls}) is
within  a multiplicative factor of $(1-\epsilon/8)$ of being tight, and with $w$ large enough, Inequality (\ref{e_F_in}) is also
within a multiplicative factor of $(1-\epsilon/8)$ of being tight. The result is that the expected cost of the algorithm
is at least $(1 + e - \eps) \opt$.

\begin{figure}[th]
\begin{center}\leavevmode%
\scalebox{0.58}{
  \includegraphics{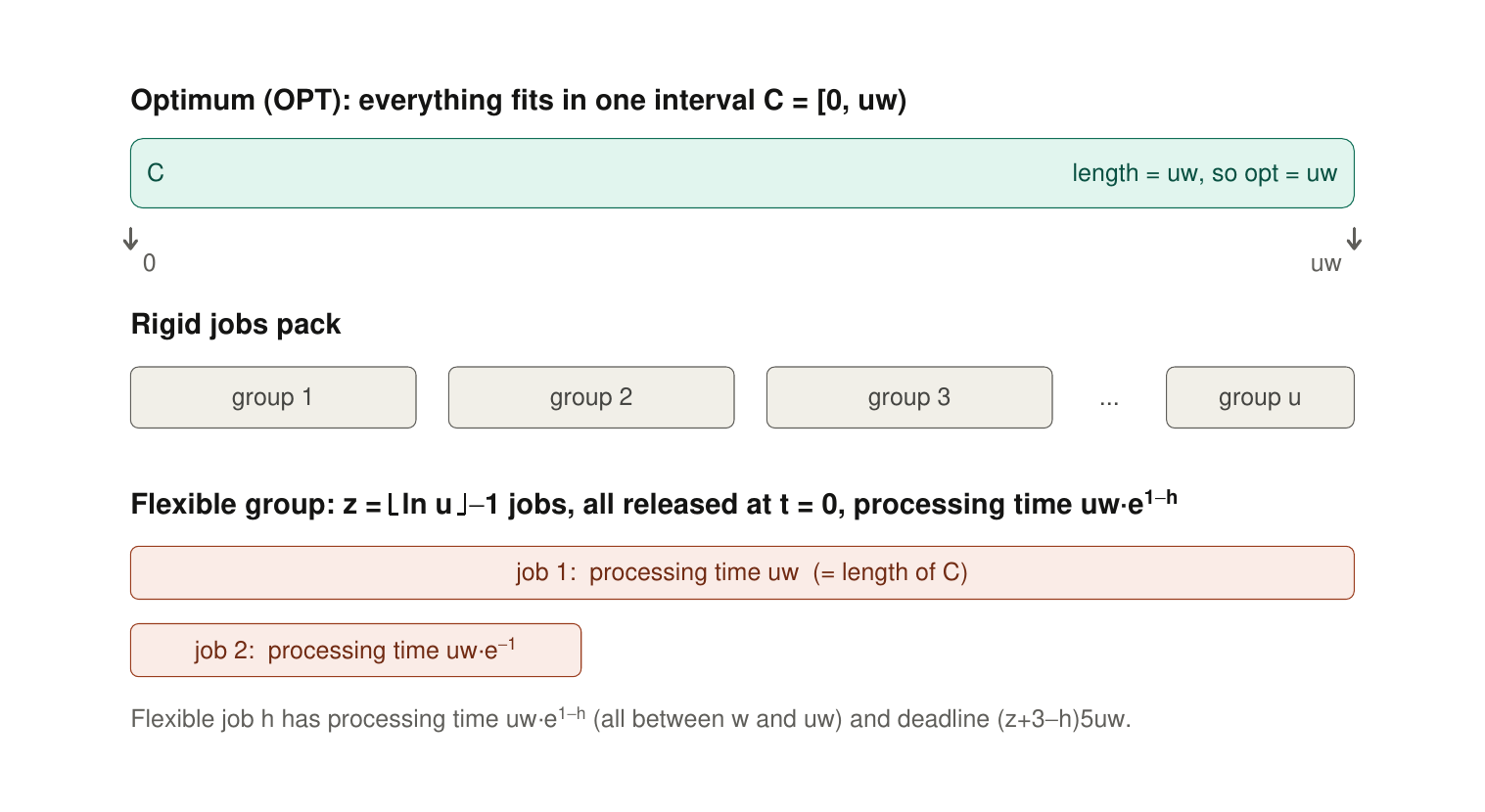}
}
\end{center}
\caption{An illustration of the series of examples showing that the competitive ratio of the auxiliary randomized online algorithm is $1+e$.
The short rigid jobs from the groups will be scheduled very well, but there will be gaps between groups, and because of this
the algorithm will  schedule the long flexible jobs  to start each at their  starting deadlines 
(not drawn, but staggered so that the shorter flexible jobs have smaller deadlines). 
}
\label{f_example}
\end{figure}

\section{Algorithms for Capacitated Busy Time with  Lookahead}
\label{s_cap}

Recall that $g$ is the capacity of a machine, the number of jobs it can run simultaneously. We use $\opt_c$ for the optimum of the capacitated problem,
and use later that $\opt_c \geq \opt$ 
(so $\opt$ still is the minimum span of the input).
First we mention the ``volume"\footnote{Also called ``mass" in the literature.}
lower bound:
\begin{equation}
    \label{e_v}
    \opt_c \geq \frac1g v(J),
\end{equation}
where for a set of jobs $A$, we define $v(A) := \sum_{j \in A} p_j$,
and $J$ is the input set of jobs.

If $g \leq 2$, one gets a $2$-competitive algorithm by simply scheduling every job on its own machine.
From now on in this section, we assume $g \geq 3$.
We present two online algorithms (each having two variants, deterministic or randomized). 

\cite{koehler2017busy} succinctly describe a $12$-competitive online algorithm for Capacitated Busy Time scheduling with a $2 p_{max}$-lookahead.
Next we use a variant of their algorithm,
with three straightforward modifications. First, we use our randomized
algorithm from Theorem \ref{thm: randomized}.
Second, we use a  slightly different way of partitioning jobs into  two 
(even and odd) sets, as we only allow half the lookahead.
Third, an easy generalization gives:

\begin{theorem}
    \label{t_l}
    Let $l$ be a positive integer. With a lookahead of $\px/l$ and against an oblivious adversary,
    there is a randomized online algorithm for Capacitated Busy Time scheduling with
    competitive ratio of at most $2 + (l+1)(1+e)$. 
\end{theorem}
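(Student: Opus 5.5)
The plan is to reduce to the unbounded-capacity case. First fix start times with the randomized algorithm of Theorem \ref{thm: randomized}, run on a few disjoint subfamilies of jobs, and then assign the timed jobs to machines of capacity $g$ online. The target split of the bound $2 + (l+1)(1+e)$ is:
\begin{itemize}
\item $(l+1)(1+e)\,\opt_c$ for the span of $l+1$ independent runs of the span algorithm;
\item $2\,\opt_c$, obtained as twice the volume bound (\ref{e_v}), for the overhead of packing the jobs into machines.
\end{itemize}

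\textbf{Classes and span cost.} Cut the time line into segments $S_k=[k\px/l,(k+1)\px/l)$. Let $G_k$ be the set of jobs with $r_j\in S_k$. Put job $j$ in class $m\in\{0,\dots,l\}$ if $j\in G_k$ with $k\equiv m \pmod{l+1}$. For each class $m$, run a separate copy of the algorithm of Theorem \ref{thm: randomized}, with its own or a shared random $y$, on the jobs of that class only; this fixes an interval $I_j\subseteq W_j$ for every job. Each class is a subset of $J$, so its minimum span is at most $\opt\le\opt_c$. By linearity of expectation, the total expected span over all classes is at most $(l+1)(1+e)\opt_c$; no independence between classes is needed. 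Lookahead is not used in this step.

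\textbf{Packing.} Inside class $m$, handle each group $G_k$ separately. Consecutive groups of the same class have release segments at least $l\cdot\px/l=\px$ apart. The lookahead $\px/l$ means that at time $k\px/l$ (the start of $S_k$) the whole group $G_k$ is known, together with its processing times. I would assign the jobs of $G_k$ greedily in a Koehler–Khuller style. Sort the jobs of $G_k$ by processing time in non-increasing order and cut the list into consecutive chunks of $g$ jobs; each chunk goes to one fresh machine. A chunk's machine is busy at most on the union of its jobs' intervals. Every job starts no earlier than $k\px/l$, and the release times differ by less than $\px/l$. The key claim to prove is that a full chunk's busy time is at most twice the average volume of the next chunk plus something charged to span, giving the standard telescoping bound
\[
\text{cost}(G_k)\le \lambda\Big(\bigcup_{j\in G_k} I_j\Big) + \frac{2}{g}\,v(G_k).
\]
Here the first chunk, or each incomplete chunk, is charged to the span term of its class, and every other chunk is charged to the volume of its predecessor. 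Summing over $k$ and $m$ gives an expected total of at most $(l+1)(1+e)\opt_c + \frac2g v(J) \le (2+(l+1)(1+e))\opt_c$ by (\ref{e_v}). This requires that the span terms of different groups $G_k$ in the same class overlap in a way still covered by the class span; I will check this through the union bound $\lambda(\bigcup)\le$ the class span, restricting to the part of the union that comes from first chunks.

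\textbf{Main obstacle.} The hardest step is the packing inequality above. The difficulty is that the intervals $I_j$ are chosen online by the span algorithm, possibly long after $k\px/l$, and jobs in one chunk may start at very different times. So I cannot simply say that a chunk is covered by a window of length $2\px$ as in the rigid analysis. My first attempt is to charge each machine's busy time to the sum of its longest job and the spread of its start times. Then I will use the algorithm's structure: all jobs of $G_k$ that are not flag jobs are placed in some interval $RI(\cdot)$ as soon as they fit. I hope this keeps the start times of a chunk inside a few activated intervals, whose lengths are already paid for in the span term. If that fails, I would instead decide the machine at each job's actual start time, opening a new machine for $G_k$ only when the current one has $g$ jobs, and redo the charging along the start-time order.
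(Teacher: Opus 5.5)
Your span accounting ($l+1$ runs, each with expected span at most $(1+e)\opt$) is fine. The packing step, however, has a genuine gap that no choice of chunking repairs. Grouping jobs by \emph{release} segment does not localize the intervals $I_j$ that the span algorithm later chooses, and packing each $G_k$ on its own fresh machines is then unboundedly bad.

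Take $g$ jobs of length $\px$, one released in each of $g$ segments of the same class, all with the same far deadline $D$. The copy of the algorithm of Theorem~\ref{thm: randomized} for that class opens one flag interval at $D-\px$ and runs all $g$ jobs in $[D-\px,D)$, so the class span is $\px$. Your scheme (and also your fallback) puts each group on a separate machine and pays $g\px$, while $\opt_c=\px$.

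In your terms, the overlap check fails because the inequality goes the wrong way: $\sum_k\lambda(\Span(G_k))\ge\lambda(\bigcup_k\Span(G_k))$, and the left side can be $g$ times the class span. Moreover, at time $k\px/l$ the intervals $I_j$ for $j\in G_k$ are not yet determined: a flexible job may start much later, at its starting deadline or inside a later flag interval. So seeing $G_k$ early does not let you pack it.

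Independently, even for a time-localized set of rigid jobs, chunking by processing time does not give $\lambda(\bigcup I_j)+\frac2g v$. Place $g$ equal tiny jobs of length $\delta$ at each of $g$ well-separated start times, and perturb the lengths so the sorted order interleaves the start times. Then each chunk is busy $g\delta$, for a total of $g^2\delta$ against your claimed bound of $3g\delta$. A time-aware, track-based packing such as the offline algorithm of \cite{CKM17} is needed.

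The missing idea, which is the paper's route, is to group by \emph{start} time of a single schedule rather than by release time.

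\begin{enumerate}
\item Run one copy of the algorithm of Theorem~\ref{thm: randomized} on all of $J$, with a fixed rule assigning each non-flag job to the earliest-created flag interval in which it fits.
\item Its decisions during $[k\px/l,(k+1)\px/l)$ depend only on jobs released before $(k+1)\px/l$, which the lookahead reveals at time $k\px/l$. So at that moment you can compute the set $A_k$ of jobs the simulation starts in this block, together with their now-rigid execution intervals. All of these lie in $[k\px/l,(k+1)\px/l+\px)$.
\item Pack $A_k$ on fresh machines with the offline algorithm of \cite{CKM17}, whose busy time is at most $\lambda(\Span(A_k))+\frac2g v(A_k)$.
\item Since $\Span(A_k)$ and $\Span(A_{k+l+1})$ are disjoint, for each residue $m$ modulo $l+1$ the sum $\sum_{k\equiv m}\lambda(\Span(A_k))$ is at most the span of the single run, whose expectation is at most $(1+e)\opt$.
\item Summing over the $l+1$ residues and applying (\ref{e_v}) gives $2+(l+1)(1+e)$.
\end{enumerate}

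So the factor $l+1$ comes from splitting one schedule's start-time blocks into $l+1$ families with pairwise disjoint spans, not from $l+1$ separate runs. The lookahead is used precisely to know the execution intervals one block in advance.
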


\begin{proof}
We discuss the case $l=1$, with the generalization to larger $l$ straightforward
as discussed in the last paragraph of the proof.

Recall the set $F$ of flag jobs of the algorithm from Theorem \ref{thm: randomized},
and that $RI(j)$ is the interval associated with flag job $j$.
Let $f_t$ be the last of the flag jobs scheduled to start before time $t$, if any.
Note that the right endpoint of $RI(f_t)$ exceeds the right endpoint of $RI(j)$ 
for any other flag job $j$ scheduled to start before time $t$, 
since if not, job $f_t$ would be scheduled in $RI(j)$.
Precisely, at time $k p_{max}$,
we start with an interval $I^k$, possibly empty, where  
$I^k = [k p_{max}, + \infty) \cap RI(f_{k p_{max}})$ if $f_{k \px}$ exists, 
and $I^k = \emptyset$ otherwise.
We can see all the jobs with release time before $(k+1) p_{max}$.
We can determine the subset $A_k$ of these jobs that will
be started during the interval $[k p_{max}, (k+1) p_{max})$ by the algorithm from Theorem \ref{thm: randomized}
(in other words, the span algorithm is simulated one block ahead).
These are either flag jobs, or jobs that are to be scheduled in the intervals determined by the flag jobs.
The next paragraph explains why $A_k$ can indeed be computed at time $k p_{max}$. 

Note that even if some of the intervals $RI(j)$ for $j$ in $F$ have non-empty pairwise intersection, a non-flag job $j'$ fits
fully in one such $RI(j)$: precisely we
assign $j'$ to the earliest-created flag interval $RI(j)$ satisfying
$\lambda(RI(j)\cap W_{j'})\ge p_{j'}$
(This tie-breaking convention is a minor modification that does not affect the analysis of Theorem \ref{thm: randomized}).
Let $U_k$ be the set of jobs that have not yet been scheduled by  the
algorithm from Theorem \ref{thm: randomized} at time $k \px$,
and that are to be released before $(k+1) \px$. 
Due to lookahead, $U_k$ is known to this algorithm at time $k \px$.
At time  $k p_{max}$, we have enough information to see
which jobs  of $U_k$ fit inside $I^k$, which jobs  of $U_k$ become flag jobs, and which jobs of $U_k$ fit inside $RI(j)$ 
for some $j$ that becomes a flag job before $(k+1) \px$.
All these jobs together become the set $A_k$.

Also note that  we can compute $I^{k+1}$ as well:
the latest flag job  $j \in \bigcup_{h \leq k} A_h$
has an interval whose overlap with the next block is at least as large as that of every earlier flag interval
(because flag job $j_1$ cannot fit in $RI(j_2)$ if $j_2$ is a flag job scheduled before $j_1$).

The execution interval of each of the jobs in $A_k$ is fixed by the algorithm from Theorem \ref{thm: randomized}
and scheduled {\em on new machines} by the $3$-approximation
offline  Algorithm 1 (and Theorem 5) of \cite{CKM17}.
If $I_j$ is fixed for $j \in A$, we define  $\Span(A) = \bigcup_{j \in A} I_j$.
The busy time of the output of this algorithm is at most
$\lambda(\Span(A_k)) + (2/g) v(A_k)$,
from the first two sentences of the proof of their Theorem 5. 
Note that the sets $A_k$ are disjoint for different values of $k$.
The overall expected busy time is then at most:
\begin{equation}
    \label{e_even_odd}
    \mathbb{E} \big[ \sum_k \left( \lambda\left( \Span(A_k) \right) + (2/g) v(A_k) \right) \big] =
    (2/g) v(J) +  \sum_{k \mbox{ even}} \mathbb{E} \big[ \lambda \left( \Span(A_k) \right) \big]  
    +  \sum_{k \mbox{ odd}} \mathbb{E} \big[ \lambda \left( \Span(A_k) \right) \big].
\end{equation}
Note that $\Span(A_k)$ is disjoint from $\Span(A_{k+2})$, and therefore $\sum_{k \mbox{ even}} \lambda \left( \Span(A_k) \right)$
is at most the output of the algorithm from Theorem \ref{thm: randomized}.
The same holds for $\sum_{k \mbox{ odd}} \lambda \left( \Span(A_k) \right)$.
As $\opt_c$ is at least the optimum span of $J$, using Theorem \ref{thm: randomized},  and inequalities (\ref{e_v}) and (\ref{e_even_odd}),
we obtain that the cost of the output of this (randomized online with lookahead) algorithm is, in expectation, no more than $(4+2e)\opt_c$.

With a lookahead of $\px/l$, for integer $l > 1$, 
one can use the same approach, considering intervals of the form $[k \px/l, (k+1) \px/l)$.
In the analysis above, we have that $\Span(A_k)$ is disjoint from
$\Span(A_{k+l+1})$, and
instead of even and odd, we have $l+1$   groups with 
a randomized competitive ratio of at most $2 + (l+1)(1+e)$.
\end{proof}

If, above,  we use the variant with the deterministic online algorithm of
\cite{koehler2017busy} instead of our randomized algorithm, we
get a competitive ratio of at most $2 + 5(l+1)$. For $l=1$,
this is the $12$-competitive algorithm from Section 2.3 of \cite{koehler2017busy}
except we are careful to only allow a lookahead of $p_{max}$.

From now on the lookahead is $p_{max}$. A deterministic
$9$-competitive algorithm was reported by
\cite{albers2025onlinebusytimescheduling} (v2),
but their proof fails. Appendix  \ref{a_albers}
discusses their algorithm and gives a counterexample.
We could not fix their proof, and instead use 
a variant of their algorithm carefully combined with
ideas from \cite{koehler2017busy}, \cite{CKM17},
and looking at the problem from a different angle.
With a better analysis, we obtain  below
a deterministic competitive ratio of at most $8$.

\subsection{$3$-Tracks and the inclusion property with $p_{max}$-lookahead}
\label{ss_3tracks}

The algorithm $\OGT$ from this subsection replaces
Algorithm 4 of \cite{albers2025onlinebusytimescheduling} (v2).
In this subsection, the jobs are rigid,
as a result of simulating online algorithms for Span Minimization as explained later.
Precisely, $r_j$ and $d_j$ in this
subsection refer to endpoints of these rigidized execution intervals,
rather than for the original flexible window.
We use our own notation, which we developed in the proof of Theorem \ref{t_l}.
We say that a rigid job is {\em active} at time $t$ if $t \in [r_j,d_j)$.
With $J$ being rigid, $I_j$ is fixed from the input, and for $A \subseteq J$,
we define $\Span(A) := \bigcup_{j \in A} I_j$.

A {\em track}, following \cite{CKM17}, is a set of pairwise w-disjoint jobs.
A {\em $3$-Track} consists of three tracks.
Algorithm $\OGT$ produces a list of $3$-Tracks
$T_1, T_2, \ldots$, where the order matters.

Before running Algorithm $\OGT$,
initialize $\tau=0$, all $3$-Tracks and tracks as empty, and then
Algorithm $\OGT$ is invoked for $k=0,1,2,\ldots$.

A {\em chain} is a set of jobs such that at any moment in time, no more than two are active.
From a set of jobs $A$, the procedure $LongChain(A)$ that we describe below finds a chain $P$ such that $\Span(P) =  \Span(A)$.
Add to $P$ the job $j$ with the earliest release time.  Set $j$ as the ``last" job in the chain.
If there exists a job $j'$ that intersects the last job in the chain and $j'$ has deadline after the last job of the chain, find among these jobs the one
with the largest deadline,
add it to the chain, and make it the last job of the chain.
If no such $j'$ exists, find $\hj$, the job with 
the earliest release time among the jobs
with release time at or after the deadline of the last job of the chain,
breaking ties to favor the job with the latest deadline.
Add $\hj$ to the chain and make it the last job of the chain. If $\hj$ does not exist, stop.

It is not hard to verify that $LongChain(A)$ has the desired properties.
It is also part of the proof of Theorem 5 of \cite{CKM17}.
Starting with $A \leftarrow J$ and then
repeatedly using $ P \leftarrow LongChain(A)$ and 
updating $A \leftarrow A \setminus P$ would give us $2$-Tracks that satisfy the inclusion property mentioned in Appendix \ref{a_albers}.
However, this would not be an online algorithm with limited lookahead.

Instead we proceed with blocks of size $p_{max}$ as in Section \ref{s_cap},
and at time $k\px$ we must schedule the set $A_k$.
We start the block at time $k p_{max}$  with a list of $3$-Tracks  
$T_1, T_2, \ldots, T_{\tau}$.

As a $3$-Track is being produced, it is naturally split into tracks:
sort the jobs by starting time, and just don't overlap existing jobs
on track. Already placed jobs are not moved.
Pick a fixed deterministic rule if there are ties in placing this job.
For $3$-Track $T_i$, we call $T_i(1)$, $T_i(2)$, and $T_i(3)$ these three tracks.
Let $P^k_i(1)$,  $P^k_i(2)$,  and $P^k_i(3)$ be the set of jobs 
added to  $T_i(1)$, $T_i(2)$, and $T_i(3)$ when processing the block
$[k \px, (k+1) \px)$.
To  be clear, the sets  $P^k_i(1)$,  $P^k_i(2)$,  and $P^k_i(3)$ 
are {\em after} 
the {\bf for} loop removes jobs to eliminate the fourth job active at some time $t$.
Please refer to the pseudocode of $\OGT$ in Algorithm \ref{alg:v3}.

\begin{algorithm}[hbt!]
\caption{\alg{OnlineGreedyTrackingV2}}\label{alg:v3}
$A \leftarrow A_k$ \\
  \For{ $ i \leftarrow 1$ to $\tau$ }{
        \If{$A = \emptyset$ }{RETURN}
        $P \leftarrow LongChain(A)$ \\
        \For{$t'$ starting with $k p_{max}$, increasing, and checking every job interval endpoint}{
            \If{there are $4$ jobs of $T_i \bigcup P$ scheduled at time $t'$}{
            remove from $P$ the job with the earliest deadline from the jobs of $P$ active at time $t'$
            }
        }
        $T_i \leftarrow T_i \bigcup P$ \\
        $A \leftarrow A \setminus P$ \\
        Put $P^k_i(1)$ on track $3i-2$ \\
        Put $P^k_i(2)$ on track $3i-1$ \\
        Put $P^k_i(3)$ on track $3i$ \\
  }
  $i \leftarrow \tau$ \\
  \While{$A \neq \emptyset$}{
    $P \leftarrow LongChain(A)$ \\
    $i \leftarrow i+1$ \\
    Create a new $3$-Track $T_i$  with the jobs of $P$ \\
    $A \leftarrow A \setminus P$ \\
        Put $P^k_i(1)$ on track $3i-2$ \\
        Put $P^k_i(2)$ on track $3i-1$ \\
        Put $P^k_i(3)$ on track $3i$ \\
    }
    $\tau \leftarrow i$
\end{algorithm}

The algorithm maintains the following four invariants:
\begin{enumerate}
\item Each $T_i$ is indeed a $3$-Track, meaning that its set of jobs can be 
partitioned into three tracks.
\item 
$\Span(T_i) \subseteq \Span(T_{i-1})$ for $i>1$. 
    \item 
All the jobs with starting time before time $k p_{max}$ belong to exactly
one of these $3$-Tracks. 
\item
Each $T_i$ has at most {\bf two} jobs active at time $k p_{max}$.
\end{enumerate}
These four invariants will also hold at time $(k+1) p_{max}$,
as it is easy to verify, based on the remainder of this paragraph.
The crux is that when a job $x$ is removed
from $P$ because there are four jobs of $T_i \cup P$ active at time $t$,
$\Span(T_i \cup P)$ does not change. Indeed, at time $t$ we had two jobs
from $T_i$, and besides $x$ another job $y$ from $P$, where the
deadline of $y$ exceeds the deadline of $x$. The part of $x$ at or before $t$ 
is included in $\Span(T_i)$ and the part of $x$ 
after $t$ is included in $\Span(\{y\})$.

Note that $\Span(T_i) \subseteq \Span(T_{i-1})$, the inclusion property
that Algorithm 4 of \cite{albers2025onlinebusytimescheduling}  (v2) fails to maintain.

Let $B_1, B_2, \ldots$ be the bundles produced by the algorithm,
sorted by the $3$-Tracks they contain.
Bundle $h$ consists of the tracks $(h-1)g + 1, (h-1)g + 2, \ldots, hg$,
and these are the tracks assigned to machine $h$.
Machine $h$ is busy for the span of  the jobs assigned to it.

The tracks of Bundle $i$ come from $3$-Tracks $T_{u}$ for some range of $u$ that
thankfully we do not need to compute explicitly 
(though surely it can be done; the range formula
depends on $g \pmod{3}$ and $i \pmod{3}$). 
Let $u(i)$ be the first among these $3$-Tracks.
Let $D(i,q)$ be the number of tracks in bundle $i$ that come from $3$-Track $q$.
Then $D(i,q) \in \{0,1,2,3\}$ and $u(i) = \min\{h : D(i,h) > 0\}$.

From the $3$-Track inclusion property, for $q \leq u(i)$ we have
\begin{equation*}
    \lambda ( \Span (B_i) ) \leq \lambda ( \Span (T_{u(i)}) ) 
    \leq \lambda ( \Span (T_q) ) \leq v(T_q)
\end{equation*}
Note that any track in $B_{i-1}$ comes from a $3$-Track $q$ with $q \leq u(i)$.
Take such a $q$. We have
\begin{equation*}
    D(i-1,q) \lambda ( \Span (B_i) ) \leq D(i-1,q) v(T_q)
\end{equation*}

Every bundle other than the last one is full,
and therefore bundle $(i-1)$ is full whenever bundle $i$ is nonempty.
Using this, and
summing the above inequalities for all $q$ with $D(i-1,q) > 0$ we obtain:
\begin{equation*}
    g \lambda ( \Span (B_i) ) \leq \sum_q D(i-1,q)  v(T_q)
\end{equation*}
where we can take the last summation over all $q$. Then:
\begin{equation*}
    \lambda (\Span (B_i) ) \leq \frac1g  \sum_q D(i-1,q)  v(T_q)
\end{equation*}
Summing up over $i>1$ we obtain:
\begin{align}
    \sum_{i > 1} \lambda (\Span (B_i) ) & \leq &
    \frac1g \sum_{i>1} \sum_q D(i-1,q)  v(T_q)  \\
    & \leq &\frac{1}{g} \sum_q v(T_q) \sum_{i>1}  D(i-1,q)  \\
    & \leq &  \frac{1}{g} \sum_q 3 v(T_q)  \leq 3 \frac{1}{g} v(J) \leq 3 \opt_c,
    \label{e_3opt}
\end{align}
with the last inequality from Lower Bound (\ref{e_v}).
The analysis of the deterministic algorithm proceeds as follows:
Simulate the
deterministic online algorithm for Span Minimization of \cite{koehler2017busy}
to obtain the block sets $A_k$ (this simulation
can be done just like the corresponding simulation of our randomized algorithm). 
Since this deterministic algorithm has competitive ratio of $5$,  we obtain:
\begin{equation*}
    \lambda(\Span(B_1)) \leq 5 \opt.
\end{equation*}
Combine this with Inequality (\ref{e_3opt}) and obtain an $8$-competitive algorithm.
For a randomized algorithm, combine Inequality (\ref{e_3opt}) 
with the upper bound of $(1+e) \opt$ on
the span of $B_1$ from Theorem \ref{thm: randomized} to obtain a $(4+e)$-competitive 
online algorithm. 

When the input consists of rigid jobs, $\lambda(\Span(B_1)) \leq \opt$.
Theorem \ref{t_cap} follows.

\section{Lower Bounds on the Competitive Ratio - General Case} \label{sec: adversary-4}

In this section we prove Theorems \ref{thm: lb_nested_deterministic} and \ref{thm: lb_nested_randomized}.
We use results and techniques from \cite{SLLA25}, with non-trivial adaptations for our problem.
First, we follow \cite{SLLA25} by defining an online \emph{Button  problem}.

In this problem, we are given an ordered list $b$ of $m$ buttons where each button $j$ is associated with a price $b_j \in \QQ_{\geq 1}$.
The prices are nondecreasing: $1 = b_1 \leq b_2 \leq b_3 \leq \cdots \leq b_m$.  Some buttons are designated as target buttons, which form a
suffix of the button list, i.e., there exists some $j^* \leq m$ such that buttons $j^*$ to $m$ are all targets, and
none of the  buttons before $j^*$ are targets. The list $b$ is known from the start.
However, we do not know in advance the first target button
(i.e., $j^*$). We can learn whether a button $j$ is a target or not only by pressing the button at the price
of $b_j$. The online algorithm repeatedly presses buttons until it presses any target button, and the natural
objective is to minimize the total price.

Restating the definition from the introduction,
we say that a deterministic algorithm $\cA$ is $\gamma$-competitive if for all lists $b$ and all
$j^* \in \{1, 2, \ldots, m\}$, we have $\cost(\cA,b,j^*)  \leq \gamma \cdot b_{j^*}$,
where $\cost(\cA,b,j^*)$ is the total price that the algorithm $\cA$ incurs on input $b$
until it presses a target button when the first target button is $j^*$.
Note that algorithm $\cA$ can be identified with a sequence  $i_1(\cA,b), i_2(\cA,b), \ldots, i_{k(\cA,b)}(\cA,b)$
which are the buttons pressed by the algorithm  when the target is $m$,
where we assume without loss of generality that $i_h(\cA,b) < i_{h+1}(\cA,b)$
and $i_{k(\cA,b)}(\cA,b) =  m$.
The algorithm, of course, stops after hitting the first target, i.e., the first $w = w(\cA,b,j^*)$
for which $i_w(\cA,b) \geq j^*$, and its total price is 
$\cost(\cA,b,j^*) = \sum_{h=1}^w b_{i_h(\cA,b)}$.

Combining Lemma 5.9 and Theorem 5.11 of \cite{SLLA25}   into a single theorem, we have:
\begin{theorem}\label{thm: lb_button_4} [from Section 5.3 of \cite{SLLA25}]
    For any $\epsilon > 0$ there is an $m = m(\epsilon)$ and a list $b$ of button prices $ 1 = b_1 < b_2 < \cdots < b_m$ such that
    for any deterministic algorithm $\cA$ there exists a $j^* = j^*(\epsilon,\cA)$ such that
    $\cost(\cA,b,j^*) > (4 - \epsilon) b_{j^*}$. Rephrased:
    For any $\epsilon > 0$ there is an $m = m(\epsilon)$ and a list $b$ of button prices $ 1 = b_1 < b_2 < \cdots < b_m$ such that
    for any increasing sequence $S$ of button indices $i_1, i_2, \ldots, i_k = m$,
    there exists a  $j^* = j^*(\epsilon,S)$ such that
    $ \sum_{h=1}^w b_{i_h} > (4 - \epsilon) b_{j^*}$, where
    $w = \min \{u : {i_u} \geq j^*\}$.
\end{theorem}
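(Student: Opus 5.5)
The statement is a discretized version of the classical ``doubling'' (or cow-path) lower bound of $4$. My plan is to take geometric prices with a very small ratio, $b_j=(1+\delta)^{j-1}$ for $j=1,\ldots,m$, and reduce to the continuous problem over increasing real sequences. I would choose $\delta$ small and $m$ large as functions of $\epsilon$. Fix $c=4-\epsilon/2$ and suppose, for contradiction, that some increasing sequence $S=(i_1,\ldots,i_k=m)$ satisfies $\sum_{h\le w}b_{i_h}\le(4-\epsilon)b_{j^*}$ for every $j^*$. Write $x_h=b_{i_h}$ and $S_w=\sum_{h\le w}x_h$.

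The key step is to restrict the adversary to $j^*=i_{w-1}+1$ for $w\ge2$, i.e.\ the button just after the $(w-1)$-st press, and to $j^*=1$ for $w=1$. In both cases the first target is first hit at press $w$. Since $b_{i_{w-1}+1}=(1+\delta)x_{w-1}$ and $(4-\epsilon)(1+\delta)\le c$ for $\delta$ small, the assumption gives:
\begin{itemize}
\item $S_w\le c\,x_{w-1}$ for all $2\le w\le k$;
\item $x_1\le c$.
\end{itemize}
In particular $x_w\le c\,x_{w-1}$, so $x_w\le c^w$. Hence if $m$ is chosen with $b_m>c^{N+1}$, the sequence needs $k>N$ presses. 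Here $N=N(c)$ is a bound fixed in the next step.

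The remaining step, which I expect to be the main obstacle, is a uniform bound: any positive sequence with $S_w\le c\,x_{w-1}$ for $2\le w\le n$ must have $n\le N(c)$, independently of $x_1$. Linear homogeneity allows normalizing $x_1=1$. I would then compare with the extremal recurrence obtained by taking equality, $x_w=c\,x_{w-1}-S_{w-1}$. Equivalently $S_w=c(S_{w-1}-S_{w-2})$, whose characteristic polynomial $z^2-cz+c$ has complex roots when $c<4$. Its solutions therefore oscillate in sign, and the first negative term occurs after a number of steps depending only on $c$, namely roughly $\pi/\arg(z)$.

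To make the comparison with the equality case rigorous, I would pass to the ratios $q_w=S_w/S_{w-1}$. The constraint becomes $q_w\le c(1-1/q_{w-1})$, since $x_{w-1}=S_{w-1}-S_{w-2}$. Positivity of the $x_h$ requires $q_w>1$, and the map $q\mapsto c(1-1/q)$ is increasing. By induction, $q_w$ is bounded above by the orbit of this map started from the largest admissible initial value. For $c<4$ the map has no fixed point, since $q^2-cq+c>0$, and lies strictly below the diagonal by a margin depending only on $c$. So the orbit drops below $1$ after at most $N(c)$ steps, contradicting $q_w>1$.

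Choosing $m$ with $(1+\delta)^{m-1}>c^{N(c)+1}$ then yields the contradiction and proves the theorem. This also shows why $m$ must depend on $\epsilon$: as $c\to4$ the margin vanishes and $N(c)\to\infty$.
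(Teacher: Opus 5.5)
Your proof is correct, and it takes a genuinely different route from the paper, because the paper does not prove this statement at all. It imports it by combining Lemma~5.9 and Theorem~5.11 of \cite{SLLA25}, remarking only that the list there is the arithmetic grid $b_i = 1+(i-1)\delta$ and that one list works against every deterministic algorithm.

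You instead give a self-contained argument with four ingredients:
\begin{itemize}
\item a geometric grid with ratio $1+\delta$;
\item an adversary restricted to $j^* = i_{w-1}+1$, which costs only the factor $1+\delta$, absorbed into $c = 4-\epsilon/2$;
\item the growth bound $b_m = x_k \le c^k$, which forces many presses;
\item the ratio recursion $q_w \le c(1-1/q_{w-1})$, which has no fixed point when $c<4$.
\end{itemize}

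The last step can be done without your orbit comparison. By AM--GM, $q - c(1-1/q) = q + c/q - c \ge 2\sqrt{c}-c > 0$ for all $q>0$, so $q_w \le q_{w-1} - (2\sqrt{c}-c)$ directly. Together with $q_2 \le c$ this gives $N(c) \le 2 + (c-1)/(2\sqrt{c}-c) = O(1/\epsilon)$, and hence $m = O(\epsilon^{-2})$ buttons suffice for your list.

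What your route buys is transparency and explicit parameters. It also shows that the only property of the grid that matters is that consecutive prices differ by a factor of at most $1+\delta$. The paper's arithmetic list has this property too, since $b_i \ge 1$, but with it, reaching $b_m > c^{N+1}$ takes exponentially many buttons. Either list works for the downstream reduction in Theorem~\ref{thm: lb_nested_deterministic}, which uses only $b_1 = 1$, monotonicity, and the independence of $b$ from the algorithm.

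Two small housekeeping points: take $\delta$ rational so the prices lie in $\QQ$, and dispose of the trivial case $\epsilon \ge 4$ before choosing $\delta$.
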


This theorem implies that no deterministic algorithm for the Button problem
has a competitive ratio smaller than 4.
The statement of the theorem is stronger than this, however, as the list $b$ does not depend on the algorithm $\cA$; 
we will use this fact, 
namely, that for all algorithms, we have  the same $m$ and $b_m$.
As an aside, the list $b$ is given by $b_i = 1 + (i-1) \cdot \delta$, for a small
$\delta$ that depends on $\epsilon$.
See \cite{SLLA25} for details. Earlier work by \cite{zhang2011ski} proves a very
similar statement for Multi-Option Ski Rental,  and they explicitly notice that the same list of options works
against all deterministic algorithms. We prefer to use \cite{SLLA25}  instead of \cite{zhang2011ski} for simplicity, and
for its discussion on randomized algorithms.

A $4$-competitive deterministic algorithm for the Button  problem follows from the works of \cite{zhang2011ski,SLLA25},
and also it is straightforward to
verify that  the following algorithm achieves this ratio:
in iteration $i$, press the  button 
$j:= \arg\max\{h : b_h\le 2^i\,\}$.

Let $f$ denote the lookahead function.

\begin{proof}[\emph{of Theorem \ref{thm: lb_nested_deterministic}}]
   Let $\epsilon > 0$.  We pick a large integer $N = N(\eps)$ whose exact value will be fixed later.
    Let $m = m(\eps/4)$, and let list $b =b (\eps/4)$ be from Theorem \ref{thm: lb_button_4}, creating
    an instance of the Button problem.

    Let $\cA$ be an arbitrary deterministic online algorithm for Span Minimization.
The adversary will make $N$ {\em mega-releases}, each consisting of $m$ jobs arriving at the same time and
with different processing times and deadlines, with fixed gaps between the deadlines of jobs from the same mega-release. 
We will give the jobs' processing times and deadlines soon.  
Let $\beta = (2+m) ( 2+f(b_m))$.
We construct disjoint intervals, which we call {\em phases}, that are helpful in the proof. 
Phase 0, defined to shorten definitions,  is the interval $[0, \beta b_m   (2+m)^{N-1})$. 
Phase $q$ will be an interval $[L_q,R_q)$.

See Figure \ref{fig: lb4} for an illustration.
For $q =  1, 2, \ldots, N$, the adversary executes iteration $q$ of the following procedure.
At $R_{q-1}$ (exactly the end of phase $q-1$),
the adversary initiates mega-release $q$.
We also have $L_q = R_{q-1}$.
For $i \in \{1,\ldots,m\}$, the $i^{th}$ job of mega-release $q$ has deadline
$$ \beta b_m \left( (i + 1) (2 + m)^{N-q} + \sum_{h = 1}^{q-1}  (1 + j_h^*) (2+m)^{N-h} \right), $$
so the gap between the deadlines of the $i^{th}$ and $(i+1)^{th}$ jobs in mega-release $q$ is 
$ \beta b_m (2 + m)^{N-q} $.
For $q > 1$,
this is also the gap between the deadline of the last job of mega-release $q$ and the 
$(1 + j_{q-1}^*)^{th}$ job of mega-release $q-1$ (if $j_{q-1}^* < m$),
and the gap between the 
$(j_{q-1}^*)^{th}$ job of mega-release $q-1$ and the start of phase $q$.
Also,  for $i \in \{1,\ldots,m\}$, the $i^{th}$ job of mega-release $q$ has processing time $b_i$ 
(the same  $b_i$ for all mega-releases, the same list of buttons).
In iteration $q$, the adversary also chooses $j_q^*$ as explained below.
 Phase $q$ consists of the time interval from the arrival of mega-release $q$  to
 $R_q = \beta b_m  \left( (2+m)^{N-q-1} +  \sum_{h = 1}^{q}  (1 + j_h^*) (2+m)^{N-h} \right) $.

We say that  $\cA$  {\em books} interval $I$ for job $j$ if $I$ is the
final maximal interval
from the schedule of $\cA$ where job $j$ is completely executed by the algorithm.
We use this definition since one may think that $\cA$ can ``extend" intervals by adding to their end.
We say that a button $i$ corresponds to an interval $I$ with $\lambda(I) \geq 1$ if $i$ is largest with $b_i \leq \lambda(I)$.
All the jobs released by the adversary  will have processing time at least $1$, and  therefore
$\cA$ never books an interval of length smaller than $1$
($\cA$ could extend an existing scheduled interval by an amount smaller than $1$).
Note that if the algorithm $\cA$ starts executing a job $j$, and then it aborts and restarts $j$
(which $\cA$ is allowed to do in this section), then $\cA$ books $j$ in the interval where $j$ is completed.

In iteration $q$, the adversary
constructs from $\cA$ a deterministic algorithm $\cA_q$ for the Button problem with input list $b$.
Even if $\cA$ scheduled in phases $0, 1, \ldots, q-1$
all jobs of mega-releases $1, 2, \ldots, q-1$, it still needs to schedule the jobs of mega-release $q$.
Specifically, $\cA_q$ is constructed based on what $\cA$ does in the time 
from the arrival of mega-release $q$ until the moment it
finishes booking all jobs of mega-release $q$ (assuming no other jobs are released), as follows:
\begin{itemize}
    \item Let $I$ be the interval where $\cA$ books the first
job of mega-release $q$, and let $I' = I \cap [L_q,\infty)$.
Note that $\lambda(I') \geq 1$.
$\cA_q$ starts by pressing the button that corresponds to $I'$.
    \item Thereafter, whenever algorithm $\cA$ books a new interval $I$,
    and the button index $i$ corresponding to $I$
    is larger than all the indices previously used, then
     $\cA_q$ presses button $i$.
\end{itemize}

Simulating $A$ on the fixed set of jobs of mega-release $q$,
assuming no further releases, determines an
increasing sequence of button indices.
Any such sequence defines a deterministic Button algorithm, known to the adversary.

At the deadline of job $i$ of mega-release $q$, assuming mega-release $q+1$ did not happen yet, $\cA$ cannot see 
any jobs other than those already released, since the gap between the current time
and the next possible release is at least $\beta b_m > f(b_m) = f(\px)$.
Therefore lookahead does not help $\cA$ in any way when booking the jobs of mega-release $q$.

As a result, $\cA_q$ is a valid online algorithm for the Button problem.
$\cA_q$ does not need to be the same algorithm as $\cA_{q'}$, as $\cA$ can treat mega-releases differently.
Nevertheless,
$\cA_q$ is a deterministic algorithm for the same Button instance, and the adversary knows  how $\cA_q$ operates
after making mega-release $q$.
From Theorem \ref{thm: lb_button_4}
there exists a $j^*_q$ for which $\cost(\cA_q,b,j^*_q) > (4-\eps/4) b_{j^*_q}$.
This finishes the description of the adversary's choices in iteration $q$.

Note that no job in mega-release $q$
can be scheduled in any of phases $0, 1, \ldots, q-1$.
Also, for $0 < i \leq q-1$, we have 
$   \sum_{h = 1}^{q-i}  (1 + j_h^*) (2+m)^{N-h}  < 
  (2 + m)^{N-q} +   \sum_{h = 1}^{q-1}  (1 + j_h^*) (2+m)^{N-h} $ and 
$  (m+1) (2 + m)^{N-q} +   \sum_{h = 1}^{q-1}  (1 + j_h^*) (2+m)^{N-h}  < ( 1 + 1 + j_{q-i}^*) (2+m)^{N-(q-i)} +
 \sum_{h = 1}^{q-i-1}  (1 + j_h^*) (2+m)^{N-h}$. 
 Therefore
every job in mega-release $q$ is either nested in or w-disjoint from every job of mega-release $q-i$.

\begin{figure}[th]
\begin{center}\leavevmode%
\scalebox{0.88}{
  \includegraphics{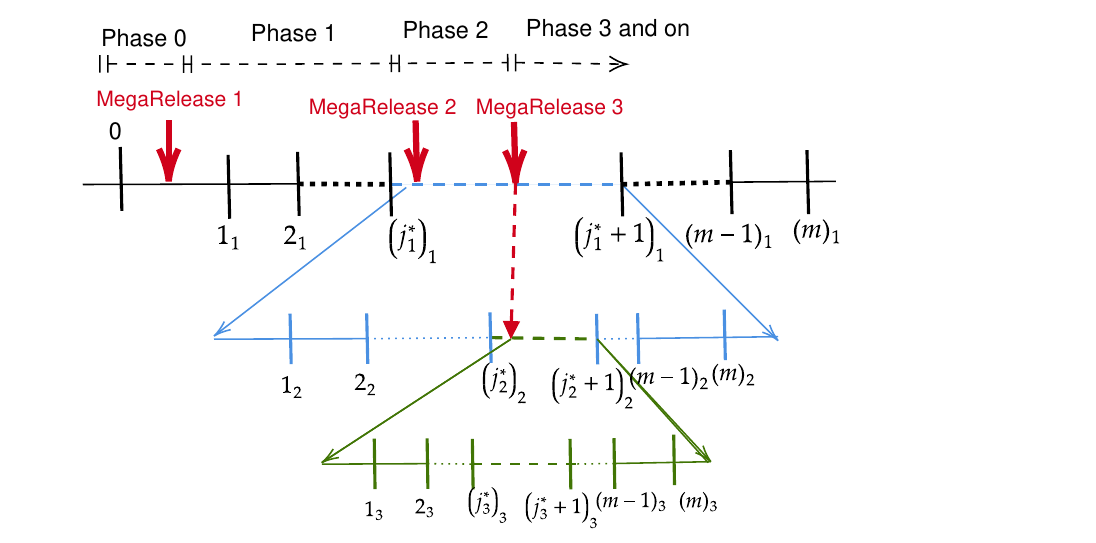}
}
\end{center}
\caption{Not drawn to scale. An illustration of the construction.  We use $i_q$ to depict the deadline of the $i^{th}$ job of mega-release $q$.
All the jobs of mega-release 2 are nested between
the deadline of the $({j^*_1})^{\text{th}}$ job of mega-release 1 and the 
deadline of  job $1+{j^*_1}$ from mega-release 1,
except if $j^*_1 = m$, when  all jobs of mega-release 2 are released
after the deadlines of all jobs of  mega-release 1. The same
holds for the jobs of mega-release 3, nested
between two consecutive deadlines of the jobs of mega-release 2 (with the same exception).
The processing times of the jobs are not depicted (the processing time of the $i^{th}$ job of mega-release $q$ is $b_i$ for all $q$), 
and the gap between the deadlines is sufficiently large for
any job to be fully processed between any two distinct deadlines.
}
\label{fig: lb4}
\end{figure}

We achieved that  the $N+1$ phases occupy disjoint intervals.
We also obtained deterministic algorithms $\cA_1, \cA_2, \ldots, \cA_N$ for the Button problem with input $b$,
and integers $j^*_1, j^*_2, \ldots, j^*_N$ such that, for $q = 1, 2, \ldots, N$, we have:
\begin{equation}
    \label{e_bj}
 \cost(\cA_q,b,j^*_q) > (4-\eps/4) \cdot b_{j^*_q}.
\end{equation}

We claim that
the cost of $\cA_q$ on Button instance  $b$ with $j^*_q$ as the first target
is at most the cost of $\cA$ in phase $q$. Indeed, $\cA$ must book the first $j^*_q$ jobs of mega-release $q$ 
and thus needs an interval $\tI_q$ of length at least $b_{j^*_q}$ that starts no later than the starting deadline of the $(j^*_q)^{th}$ job
of mega-release $q$. Each of the buttons $\cA_q$ 
presses when the target is button $j^*_q$,
except possibly the last one, corresponds to a distinct interval  
that $\cA$ uses in phase $q$, 
and the cost of the button is at most the length of the corresponding interval.
The last button corresponds to an interval we call $\tI'_{q}$.
If $\tI'_{q}$ is fully contained in phase $q$, we are done.
If $\tI'_{q}$ is so long that it enters phase $q+1$, then the portion of $\tI'_{q}$ inside phase $q$ has length at least $b_m$
(due to the gap between the deadline of the 
$(j^*_q)^{th}$ job of mega-release $q$ and the right endpoint of phase $q$), and indeed $\tI'_{q}$ corresponds to button $m$.
In all cases, $\cA_q$ does press a button numbered at least $j^*_q$.

Using Inequality (\ref{e_bj})
we obtain that the cost of $\cA$ in phase $q$ exceeds $(4 - \eps/4) \cdot b_{j^*_q}$.
By summing up over the disjoint phases, we obtain:
\begin{equation}\label{eq_sigma}
   \cost(\cA)  > (4 - \eps/4) \cdot \sum_{q=1}^N  b_{j^*_q}.
\end{equation}
Now we construct another (offline) solution $S$ to this  instance of Span Minimization:
for all $q = 1, 2, \ldots, N$,
use one interval of length $b_{j^*_q}$ at the beginning of phase $q$, and after the end of phase $N$, use
one interval of length $b_{m}$. Note that
the interval of length $b_{j_q^*}$ at the beginning of phase $q$
        schedules the first $j_q^*$ jobs of mega-release $q$.
        Also note that  every remaining job's window contains the common final interval of length $b_m$.
Therefore $S$ indeed schedules all the jobs of this instance.

Now we pick $N \geq \lceil \frac{8}{\eps} b_m \rceil$.
Since $b_{j^*_i} \geq b_1 = 1$, we get $N \leq \sum_{i=1}^N b_{j^*_i}$.
Using this and algebraic manipulation, we obtain that the cost of $S$
is at most $\sum_{i=1}^N b_{j^*_i} ( 1 + \eps/8)$.
Using Inequality (\ref{eq_sigma}) and algebraic manipulation, we obtain
\[ \cost(\cA) > (4 - \eps) \cost(S),\]
  completing the proof of Theorem \ref{thm: lb_nested_deterministic}.
\end{proof}

While the previous theorem is not new (see \cite{LT24}) and the proof above may be a complete version of
what the version v1 of \cite{albers2025onlinebusytimescheduling} tries to do,
its method allows us to quickly obtain  the proof of Theorem \ref{thm: lb_nested_randomized} (a new result) below.

Restating the definition from the introduction,
we say that a randomized algorithm $\cA$ for the Button problem is 
$\gamma$-competitive if for all lists $b$ and all
$j^* \in \{1, \ldots, m\}$, 
$\mathbb{E}[\cost(\cA,b,j^*)]  \leq \gamma \cdot  b_{j^*}$,
where $\cost(\cA,b,j^*)$ is the total price that the algorithm $\cA$ incurs on input $b$
until it presses a target button, when the first target button is $j^*$.
Note that a randomized algorithm $\cA$ can be identified with a distribution over
deterministic algorithms, and thus a distribution over sequences defined earlier for
deterministic algorithms.

Recall that $e$ is Euler's number. From the proof of Theorem 5.2 of \cite{SLLA25}, we have:
\begin{theorem}\label{thm: lb_button_e} [from Section 5.2 of \cite{SLLA25}]
    For any $\epsilon > 0$ there is an $m = m(\epsilon)$ and a list $b$ of button prices
    $ 1 = b_1 < b_2 < \cdots < b_m$ such that
    for any randomized algorithm $\cA$ there exists a $j^* = j^*(\epsilon,\cA)$ such that
    $\mathbb{E}[\cost(\cA,b,j^*) ]> (e - \epsilon) \cdot b_{j^*}$.
    Rephrased:
    For any $\epsilon > 0$ there is an $m = m(\epsilon)$ and a list $b$ of button prices $ 1 = b_1 < b_2 < \cdots < b_m$ such that
    for any distribution $\Pi$ over increasing sequences $S$ of button indices
    $i^S_1, i^S_2, \ldots, i^S_{k(S)} = m$,
    there exists a  $j^* = j^*(\epsilon,\Pi)$ such that
    $ \mathbb{E}_{S \sim \Pi} \big[  \sum_{h=1}^{w(S)} b_{i^S_h} \big] > (e - \epsilon) b_{j^*}$, where
    $w(S) =  \min \{u : {i^S_u} \geq j^*\}$.
\end{theorem}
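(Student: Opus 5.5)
The plan is to derive this from a minimax (Yao-type) argument applied to a fixed, dense, algorithm-independent price list. We exhibit one probability distribution $D$ over the first target $j^*$ such that every deterministic button algorithm $\cA$ satisfies
\[
\mathbb{E}_{j^*\sim D}\bigl[\cost(\cA,b,j^*)\bigr] > (e-\epsilon)\,\mathbb{E}_{j^*\sim D}\bigl[b_{j^*}\bigr].
\]
A randomized algorithm is a distribution $\Pi$ over deterministic increasing sequences $S$. Averaging the displayed inequality over $S\sim\Pi$ and exchanging the order of expectations gives
\[
\mathbb{E}_{j^*\sim D}\Bigl[\mathbb{E}_{S\sim\Pi}[\cost] - (e-\epsilon)\,b_{j^*}\Bigr] > 0 .
\]
Hence some $j^*$ in the support of $D$ has $\mathbb{E}_{S\sim\Pi}[\cost] > (e-\epsilon)b_{j^*}$, which is exactly the rephrased statement. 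Since $D$ and $b$ are chosen before $\Pi$, the same $m$ and $b$ work for all algorithms, as required.

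\textbf{Choice of $b$ and $D$.} Fix a large $M=M(\epsilon)$ and a small $\delta=\delta(\epsilon)$, and let the prices be a fine grid of $[1,M]$, for instance $b_i = 1+(i-1)\delta$ with $b_m = M$. Take $D$ to be the discretization of the ``equalizing'' distribution from online bidding. Its continuous version has
\[
\Pr[\text{target price} > x] = 1/x \quad \text{for } x\in[1,M),
\]
with the remaining mass $1/M$ placed at $M$. Then $\mathbb{E}[b_{j^*}] = 1+\ln M \pm o(1)$.

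\textbf{Cost of a deterministic algorithm.} Write a deterministic algorithm as prices $x_1<x_2<\cdots<x_k=M$ (the pressed buttons), and set $x_0:=1$. Button $h\ge 2$ is pressed exactly when the target exceeds $x_{h-1}$, so in the continuous model
\[
\mathbb{E}[\cost] = x_1 + \sum_{h\ge 2} \frac{x_h}{x_{h-1}} \;\ge\; k\,M^{1/k}
\]
by AM--GM on the ratios $x_h/x_{h-1}$, whose product is $M$. The function $k M^{1/k}$ is minimized near $k=\ln M$, giving the lower bound $e\ln M$. Consequently
\[
\frac{\mathbb{E}[\cost]}{\mathbb{E}[b_{j^*}]} \ge \frac{e\ln M}{1+\ln M} - o(1) \;\ge\; e-\epsilon/2
\]
for $M$ large.

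\textbf{Main obstacle: discretization.} The hardest step is carrying the continuous computation over to a finite list with rational prices $b_1<\cdots<b_m$. Two effects must be controlled: the probabilities $\Pr[j^*>i]$ only approximate $1/b_i$, and the algorithm can only press grid prices. My first attempt would be to define
\[
\Pr[j^*\ge i] := b_1/b_i \quad\text{exactly},
\]
which makes the survival function hold exactly on the grid. With this choice the identity $\mathbb{E}[\cost]=\sum_h b_{i_h}\Pr[j^*>i_{h-1}]$ becomes $\sum_h b_{i_h}/b_{i_{h-1}+1}$, a telescoping-ratio sum. Because $b_{i+1}/b_i\le 1+\delta$, this loses at most a factor $(1+\delta)^{k}$. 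The risk is that $k$ could be large, but if $k > 3\ln M$ the sum is already at least $k \ge 3\ln M$, so that case is trivially fine. For $k\le 3\ln M$, choosing $\delta \ll \epsilon/\ln M$ makes the loss negligible. Finally, computing $\mathbb{E}[b_{j^*}]$ as a Riemann sum of $\int_1^M x\cdot x^{-2}\,dx$ plus the atom at $M$ gives $1+\ln M+O(\delta)$, which completes the bound with $e-\epsilon$.
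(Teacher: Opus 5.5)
Your proof is correct, and it takes a genuinely different route: the paper gives no argument of its own for this statement. It imports the result from the proof of Theorem 5.2 of \cite{SLLA25} and only records that the hard list there is the geometric grid $b_i=e^{(i-1)/\Delta}$, approximated by rationals.

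You instead give a self-contained Yao/minimax argument: an arithmetic grid together with the equalizing target law $\Pr[j^*\ge i]=1/b_i$. The main steps all check out.
\begin{itemize}
\item The identity $\mathbb{E}_D[\cost]=\sum_h b_{i_h}/b_{i_{h-1}+1}$ holds.
\item The constraint $i_k=m$ makes the product of these ratios at least $M(1+\delta)^{-(k-1)}$.
\item AM--GM together with $kM^{1/k}\ge e\ln M$ for all real $k>0$ gives the cost bound.
\item $\mathbb{E}_D[b_{j^*}]=1+\sum_{i=2}^m\delta/b_i\le 1+\ln M$.
\item Strictness survives averaging over $\Pi$, because there are only finitely many increasing sequences ending at $m$.
\end{itemize}

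Your approach buys a stronger fact: a single target distribution defeats every algorithm simultaneously. This makes transparent the algorithm-independence of $b$ that the paper explicitly needs for its span lower-bound reduction. Your prices $1+(i-1)\delta$ (with $\delta=1/L$, $M$ an integer) are also exactly rational, so no approximation step is needed.

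The geometric grid has its own advantages. It turns $D$ into an exact truncated geometric law, makes each discretization loss exactly $e^{1/\Delta}$, and uses only about $\Delta\ln M$ buttons instead of about $M/\delta$. The larger $m$ in your construction is harmless, because the reduction only needs $m$ and $b_m$ finite and fixed in advance.

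One simplification is available. The $h=1$ term is exact, and $b_{i_{h-1}+1}\le(1+\delta)\,b_{i_{h-1}}$, so the whole sum loses only one factor $1+\delta$; equivalently, AM--GM takes a $k$-th root of $(1+\delta)^{k-1}$. Hence your case split at $k>3\ln M$ and the coupling $\delta\ll\epsilon/\ln M$ are unnecessary, though not wrong.
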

This theorem implies that no randomized algorithm for the Button problem
has a competitive ratio smaller than $e$.
The statement of the theorem is stronger than this, since the list $b$ does not depend on the algorithm $\cA$ and we will use this fact.
As an aside, the list $b$ is given by $b_i = e^{(i-1)/\Delta}$ (or more precisely, very close approximations by rational numbers),
for a large $\Delta$ that depends on $\epsilon$. See \cite{SLLA25} for details.

An $e$-competitive randomized algorithm for the Button  problem is implicit in  \cite{SLLA25} (and an $(e+\eps)$-competitive algorithm follows 
explicitly from their Theorem 4.1 and Lemma 5.1),
and also it is straightforward to verify, as in the computation from Section \ref{s_alg}, that the following works:
choose uniformly at random a $y \in [0,1)$, and
in iteration $i$, press the  button $j := \arg\max\{h : b_h\le e^{i-1+y}\}$.

\begin{proof}[\emph{of Theorem \ref{thm: lb_nested_randomized}}]
    We use exactly the same construction as in the proof of Theorem \ref{thm: lb_nested_deterministic},
    except that the vector $b$ and the $j^*_q$'s are chosen such that  $\mathbb{E}[\cost(\cA_q,b,j^*_q)] > (e - \eps/4) \cdot b_{j^*_q}$,
    where $\cA_q$ is a randomized online algorithm for the Button problem that we describe next.
    At the moment of mega-release $q$, the randomized algorithm $\cA$ can be in a number of configurations,
    each with a certain probability associated with it. The oblivious (but otherwise all-powerful)
    adversary can  compute these probabilities, and then obtain a randomized algorithm  for Span Minimization $\cA'_q$
    that follows how $\cA$ handles the jobs of mega-release $q$ assuming no further job releases. From this $\cA'_q$,
    the adversary can deduce $\cA_q$ as in the previous proof.
    As a result, $\cA_q$ is a randomized online algorithm for the Button problem that is known to the
    adversary based on the adversary's knowledge of $\cA$ and of mega-releases $1$ through $q$,
    and so the adversary can indeed choose $j^*_q$.

    The $j^*_q$'s are chosen from $\cA$'s \emph{distribution}, not its realized random choices, 
    so the whole instance is fixed before $\cA$ runs. 
    The rest of the proof is exactly the same as in the previous proof,
    but we apply Theorem \ref{thm: lb_button_e} to see that the expected  cost of  $\cA$ in phase $q$ exceeds $(e - \eps/4 ) \cdot b_{j^*_q}$,
    and then $e$ replaces $4$ in the
    following inequalities, with the algebraic manipulations holding for $e$ as well.
\end{proof}

\section{Agreeable Uniform Processing Time Jobs - Lower Bound}
\label{sec: adversary-2}

Here, we prove a lower bound of $2$ for the competitive ratio of any deterministic online algorithm.
Our construction results in an agreeable instance of jobs all with processing time $1$,
which implies that the deterministic online
algorithm  from Section 2.2 of \cite{albers2025onlinebusytimescheduling} (v2) is optimal.
Recall that for Theorem \ref{t_2}, the online algorithm is not allowed to abort and restart jobs.

\medskip

\begin{proof}[\emph{of Theorem \ref{t_2}}]
Let $\eps > 0$ and assume, for a contradiction, that a
deterministic online algorithm $\cA$ 
has competitive ratio at most $2-\epsilon$. 

We can assume without loss of generality that $\epsilon \leq 1$.
Let $M \geq \lceil 10 / \epsilon \rceil$ be an even integer. 
All the jobs will have processing time $1$.
The adversary releases a first {\em lead} job at time 0, with deadline
$M^2$, and then waits until this job is scheduled by the algorithm,
say, to start at time $t(1) \leq M^2 - 1$.
From the moment $t(1) + 1/M$ on, the adversary starts releasing jobs, all with deadline
$2 M^2$, at time $1/M$ plus the starting time (as decided by the algorithm) of the
previous job.  It will continue to do so as long as the algorithm
starts the newest job so that its execution interval intersects 
that of the preceding job.
Jobs scheduled this way are called {\em follower} jobs.
At some moment, the algorithm $\cA$ has to stop intersecting
the latest job with the previous one, or it would incur a cost
of at least $M^2$ while a solution of cost 2 exists:
all the jobs except the lead one can be scheduled to start at $2 M^2 - 1$.
See Figure \ref{f_lb_2} for an illustration.

The job that the algorithm does not start intersecting the previous job
becomes the second {\em lead job}. The adversary does nothing until time $t(2)$,
when  the algorithm $\cA$ schedules the start of the second lead job.
From the moment $t(2) + 1/M$ on, the adversary starts releasing jobs, all with deadline
$3 M^2$, at time $1/M$ plus the starting time (as decided by the algorithm) of the
previous job.  It will continue to do so as in the case of
the first lead job.
Once again, the algorithm must stop scheduling the
latest job intersecting the previous one (since it would incur a cost of at least $M^2$ while a solution of cost 3 exists).
This way we obtain a third lead job, and continue to the third round.

Continuing this way, 
if the algorithm only has $h \leq M$ lead jobs,  we obtain a contradiction, as follows.
The $h^{th}$ lead job is scheduled by $\cA$ to start at time $t(h)$, where $t(h) \leq h M^2 -1 $.
If there is no other lead job, then the follower jobs that the adversary releases,
all with deadline $(h+1)M^2$, are scheduled
in an interval $I$ starting at $t(h)$ and ending  at $(h+1)M^2$.
Then the length of $I$ is at least $1+ M^2$, while a schedule of cost $h+1$ exists:
each job is scheduled at its starting deadline. The competitive ratio of $\cA$ would be at least $(1+M^2)/(1+M) \geq 2$ 
(the inequality holds since $M \geq 10$).

The adversary 
continues for $M$ rounds, with $M+1$ lead jobs, and stops after $\cA$ schedules the $(M+1)^{th}$ lead job.
Notice that all the jobs released by the adversary are agreeable.

Let $\alpha_i$ be the length of the interval used by the $i^{th}$ lead job and its
followers.  Note that such intervals are  disjoint, from the definition of lead jobs.
The algorithm's cost is $1 + \sum_{i=1}^M \alpha_i$
(the 1 is for the $(M+1)^{th}$ lead job that does not have followers).

We consider two other possible solutions to this instance (see Solution 1 and Solution 2 in Figure \ref{f_lb_2}).
One possible solution is to schedule the first lead job with all its followers and the second lead job in an interval of length
at most $\alpha_1 + 1/M$.
The follower jobs of the second lead job are scheduled with the third
lead job (they all have the same deadline), and the followers of the
third job, and the fourth lead job, in an interval of length at most
$\alpha_3 + 1/M$. This continues, and we obtain a cost of at most
$1 + \sum_{i=1}^{M/2} ( \alpha_{2i-1}  + 1/M)$.

Another possible solution is to schedule the first lead job as is,
and schedule all its followers to overlap the schedule of the
second lead job. All the followers of the second lead job, and
the third lead job can be scheduled in an interval of length at most
$\alpha_2 + 1/M$. The followers of the third lead job are scheduled
to overlap the fourth lead job, and all the followers of the fourth
lead job, and the fifth lead job can be scheduled in an interval of length
at most $\alpha_4 + 1/M$.
This continues, and we obtain a cost of at most
$1 + \sum_{i=1}^{M/2} ( \alpha_{2i}  + 1/M)$.

\begin{figure}[th]
\begin{center}\leavevmode%
\scalebox{0.88}{
  \includegraphics{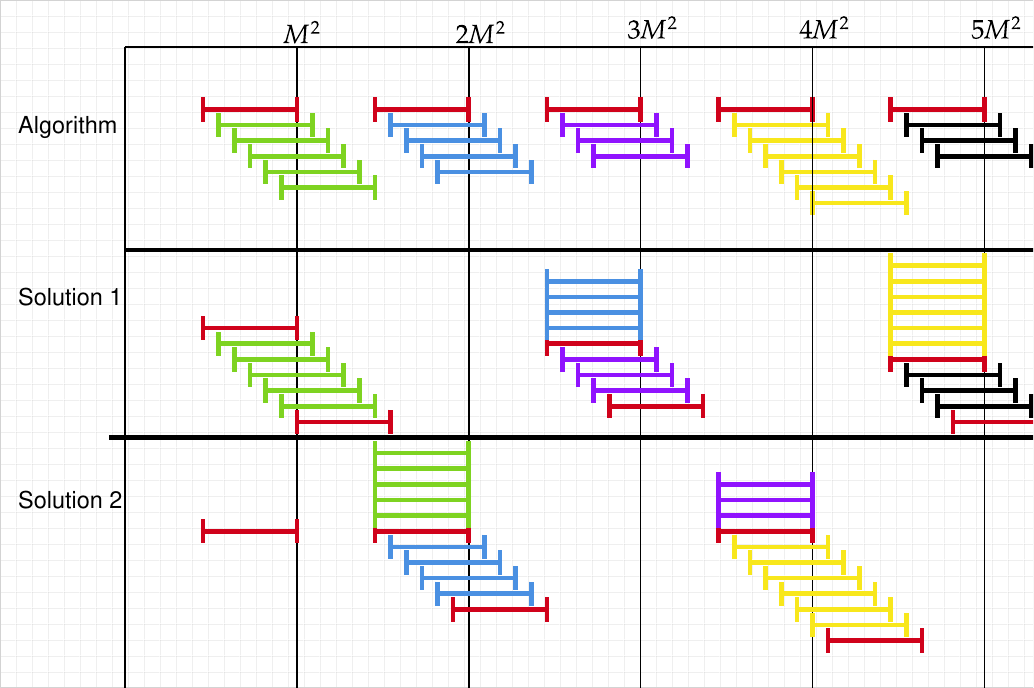}
}
\end{center}
    \caption{An illustration for the proof of Theorem \ref{t_2}. The lead jobs are in red.
    The followers of lead job $1$ are in green, and have the same deadline as lead job $2$.
    The followers of lead job $2$ are in blue, and have the same deadline as lead job $3$.
    The followers of lead job $3$ are in purple, and have the same deadline as lead job $4$.
    The followers of lead job $4$ are in yellow, and have the same deadline as lead job $5$.
    The followers of lead job $5$ are in black.
    }
    \label{f_lb_2}
\end{figure}

As $\sum_{i=1}^M \alpha_i \geq M$ (since each $\alpha_i \geq 1$), we obtain that
the sum of the costs for these two possible solutions is at most
$(1 + 3/M)$ times the cost of the online algorithm.
This then implies that the solution with the lower cost of the two is at most $(1 + 3/M)/2$ times the cost of the online algorithm.
With $\opt$ denoting the cost of an optimum solution and $\cost(\cA)$  denoting the cost of the output of algorithm $\cA$,
we obtain: 
\begin{equation*}
    \opt \leq \frac{1 + 3/M}{2} \cost(\cA) \leq
    \frac{1 + 3/M}{2}  (2 - \epsilon) \opt,
\end{equation*}
from which we deduce  $\frac{1 + 3/M}{2}  (2 - \epsilon) \geq 1$, which implies $M < 6/\epsilon$, a contradiction.
\end{proof}

Notice that this proof fails when $\cA$ is allowed to abort (and restart) jobs, as $\cA$ could abort
all the jobs started before, say, time $t(1) + (1/2)$ once it notices that all these ``follower" jobs have the
same deadline. In fact, \cite{LKT26}  obtain a competitive ratio better than $2$ when restarts are allowed.

\paragraph{Remark (flexible-rigid setting with restarts).}

In the flexible-rigid setting where a job's deadline is only revealed at its starting deadline, restarts do not help, as claimed in Theorem \ref{t_fs}.
The method for proving Theorem \ref{t_fs} is the same as in the proof of Theorem
\ref{t_2} but there are subtle differences for the two models, which is
why below we give a proof, trying not to repeat too many arguments.

\begin{proof}[\emph{of Theorem \ref{t_fs}}]
Let $\eps > 0$ and assume, for a contradiction, that a
deterministic online algorithm $\cA$
has competitive ratio at most $2-\epsilon$.
We can assume without loss of generality that $\epsilon \leq 1$.
Let $M \geq \lceil 100 / \epsilon \rceil$ be an even integer.
All the jobs will have processing time $1$.

Throughout, since $\cA$ may restart a job, \emph{the interval in which a job is
executed} means the length-1 subinterval of $\cA$'s schedule in which the job is
\emph{completed}. 
Two jobs are \emph{amalgamated} if the intervals in which they are executed intersect.
Below, the adversary releases some jobs without specifying their deadline.
Later on, the adversary sets deadlines for these jobs 
(that are at least $1$ plus the current time). 
These deadlines are  revealed to the online 
algorithm  at each job's starting deadline.
Recall that the algorithm can make only finitely many restarts.

In the first round, the lead job $1$ has deadline $M^2$.
There will be at most $M$ lead jobs, and
later on, lead job $i$ will have its deadline set to $i M^2$.
We use $j^i_0$ to denote lead job $i$. 
Every deadline assigned by the adversary is of the form $iM^2$ for some $1\le i\le M+1$.

Round $1$ begins at time $t(1)$, 
 when $\cA$ starts processing lead job $1$.
For $ 1< i<M$, round $i$ begins at time $t(i)$, 
the moment after the end of round $i-1$
 when $\cA$ starts processing lead job $i$. In round $i<M$, the
adversary releases one job every $1/M$ time units, at times
$t(i)+1/M,\ t(i)+2/M,\dots$. These jobs, not counting lead job $i$, form
\emph{train $i$}, ordered by release time. 
The adversary  stops releasing jobs in one of the following two situations: 
\begin{enumerate}
    \item $\cA$ aborts lead job $i$
    \item a job of the train is not amalgamated with any of its strict predecessors, where the predecessor of the first job in train $i$ is lead job $i$ 
\end{enumerate}

In the first situation, the adversary sets the deadline of all the jobs in the train $i$ to be 
$(i+1) M^2$ (this deadline will be revealed to the algorithm at the jobs' starting deadline, which we have not reached yet as the deadline of 
lead job $i$ is $i M^2$). 
When $\cA$ starts processing lead job $i$ again,
we create another train, which with abuse of notation we  still call train $i$.
For the purpose of  Solution 1 and Solution 2, 
the jobs in the old train $i$ are added to a set $X_i$ 
(initially empty).
Moreover, as $\cA$ must complete lead job $i$ by time $i M^2$,
it can abort only finitely often and the final attempt must run to completion.

As before, we can assume that the point where a job of the train is not 
amalgamated with any of its predecessors is always reached:
if train $i$ never ``breaks", $\cA$ is busy on an interval of length at least $M^2/4$, while
scheduling every job at its starting deadline costs at most $i+2$, and as in the proof of Theorem \ref{t_2},
we obtain a contradiction to the fact that the competitive ratio is less than 2.

Let the jobs of train $i$ be $j^i_1, j^i_2, \ldots$, sorted by release time.
Let $z$ be the time when the execution of lead job $i$ is finished.
In the second situation, where lead job $i$ finishes executing, there is 
a job of the train that
is not amalgamated with any of its predecessors. 
Consider the first (by the release time) such job $j^i_{k(i)}$ 
for which this happens. We must have a time $t$ such that all of the predecessors of $j^i_{k(i)}$ have been completed by time $t$ and  $j^i_{k(i)}$
has not been completed and is not being processed  or has  just been aborted. 
Take $t^i_*$ to be smallest $t > z$ such that at time $t$ there is a job $j^i_h$ 
of the train $i$ such that
all of the predecessors of $j^i_{h}$ have been completed by time $t$
and  $j^i_{h}$ has not been completed
and is not being processed at time $t$ or has just been aborted. 
It is easy to check that $h=k(i)$.

As before, we can assume that  $ t^i_* -z  < (1/2) M^2$
and since $z \leq i M^2$, we have $ t^i_* < (i+1/2) M^2$.
At time $t^i_*$, the adversary can detect this situation,
and ends round $i$. The adversary stops releasing jobs
and $j^i_{k(i)}\text{ becomes lead job }i+1.$
Also,
the adversary sets the deadlines of $j^i_1, \ldots, j^i_{k(i)}$  to be $(i+1) M^2$.
The adversary sets the deadlines of $j^i_{k(i)+1}, j^i_{k(i)+2},  \ldots $  to be $(i+2) M^2$.
The adversary then waits until $\cA$ starts or restarts
processing lead job  $i+1$ at time $t(i+1)$, when round $i+1$ begins.

The process ends after lead job $M$ is designated by the adversary,
and no further jobs are released.
One can easily check that deadlines are nondecreasing in release time,
so the instance is again agreeable.

For $1 \leq h \leq M-1$, let $I'_h$ be the interval that the machine is on from the last and 
final start of lead job $h$ to the completion time of all the jobs in 
$\{ j^h_0, j^h_1, \ldots, j^h_{k(h) -1} \}$. 
From the discussion above, $I'_h$ is indeed an interval.
 Let $\alpha_h = \lambda(I'_h)$.
Moreover, the intervals $I'_h$ and $I'_{h'}$ are disjoint for $h \neq h'$.
The cost of $\cA$ is at least $1 + \sum_{i=1}^{M-1}\alpha_i$. Also note that $\alpha_i \geq 1$.
For $1 \leq h \leq M-1$, let $\bI_h$ be obtained from $I'_h$ by increasing the right endpoint by $1/M$.
Let $\hI_1$ and $\hI_M$ be the length-1 intervals where lead job 1 and lead job $M$
are executed respectively.

We consider two other possible solutions to this instance,
Solution 1 and Solution 2.
Solution 1 opens all the intervals $\bI_h$ for  odd $h$, and $\hI_M$.
Solution 1 has a cost of at most
$1 + \sum_{i=1}^{M/2} ( \alpha_{2i-1}  + 1/M)$.
Solution 2 opens all the intervals $\bI_h$ for  even $h < M$, and $\hI_1$ and $\hI_M$.
Solution 2 has a cost of at most
$2 + \sum_{i=1}^{M/2 - 1} ( \alpha_{2i}  + 1/M)$.

Let us verify that Solution 1 is feasible.
For $h$ odd:
\begin{itemize}
    \item 
all the jobs of 
$\{ j^h_0, j^h_1, \ldots, j^h_{k(h)} \}$ (this includes lead job $h$ and lead 
job $h+1$) can be scheduled in $\bI_h$.
\item
all the jobs of $X_h$  can be scheduled in $\bI_h$.
\item 
all the jobs of 
$\{ j^h_{k(h)+1}, j^h_{k(h)+2} \ldots  \}$ are scheduled in $\bI_{h+2}$ except when $h=M-1$,
when all the jobs of 
$\{ j^{M-1}_{k(M-1)+1}, j^{M-1}_{k(M-1)+2}, \ldots  \}$ are scheduled in $\hI_{M}$.
\end{itemize}
For $h < M$ even, all the jobs of $X_h$ and of
$\{ j^h_1, \ldots, j^h_{k(h)}, j^h_{k(h)+1}, \ldots \}$ (this includes lead job $h+1$ and excludes lead job $h$)
are scheduled in $\bI_{h+1}$. 
The lead job $M$ 
is scheduled in $\hI_M$.
Thus Solution 1 is feasible.

Let us verify that Solution 2 is feasible.
For $h < M$ even:
\begin{itemize}
    \item 
all the jobs of 
$\{ j^h_0, j^h_1, \ldots, j^h_{k(h)} \}$ (this includes lead job $h$ and lead 
job $h+1$) can be scheduled in $\bI_h$, 
    \item 
the jobs of $X_h$  can be scheduled in $\bI_h$.
\item
all the jobs of 
$\{ j^h_{k(h)+1}, j^h_{k(h)+2} \ldots  \}$ are scheduled in $\bI_{h+2}$ except when $h=M-2$,
when all the jobs of 
$\{ j^{M-2}_{k(M-2)+1}, j^{M-2}_{k(M-2)+2}, \ldots  \}$ are scheduled in $\hI_{M}$.
\end{itemize}
The lead job $M$ can be scheduled in $\hI_M$ and
the lead job $1$ can be scheduled in $\hI_1$.

For $h$ odd, all the jobs of $X_h$ and of
$\{ j^h_1, \ldots, j^h_{k(h)}, j^h_{k(h)+1}, \ldots \}$ (this includes lead job $h+1$ and excludes lead job $h$) are scheduled in $\bI_{h+1}$
except for $h=M-1$  when these jobs are scheduled in $\hI_M$.
Thus Solution 2 is feasible.

The arithmetic works out similarly to the previous proof,
and we conclude that no deterministic online algorithm can be 
$(2-\epsilon)$-competitive in the flexible-rigid setting, even with restarts
and agreeable jobs.

\end{proof}

The previous discussion on the flexible-rigid scenario resembles the proof of Theorem 1 of
version v2 of \cite{albers2025onlinebusytimescheduling} and maybe they meant to deal with the flexible-rigid case and restarts,
but they never state that this is the case. We did mention in the introduction that we have serious doubts that their Theorem 1 achieves what
our Theorem \ref{t_2} does.

\section{Conclusions}
\label{s_concl}
We have obtained lower and upper bounds of $e$ and $1+e$ respectively
on the competitive ratio of randomized online algorithms for Span Minimization.  
The randomized algorithm presented  has a better competitive
ratio than any deterministic algorithm can achieve.

As in the deterministic case, there is a gap of $1$ between the lower and upper bounds.
Reducing this gap, for both deterministic and randomized algorithms, is of interest.
Our intuition is that none of these four bounds is tight, as the interplay between nested, w-disjoint, and agreeable pairs of jobs
makes online Span Minimization seemingly harder than the much simpler Button problem. If we were to suggest the easiest to improve,
it would be the deterministic upper bound of 5, with the help of restarts and lookahead.

The example(s) from Figure \ref{f_example}   do not show that 
the competitive ratio of $(1+e)$ is tight for our randomized algorithm. 
This is because, for many flag jobs $j$, only part of  $RI(j)$ is actually used.
Appendix \ref{a_tight} proves that 
the competitive ratio of $(1+e)$ is tight for our randomized algorithm. 

Our results and techniques may lead to improved
randomized algorithms for some bounded capacity scenarios, similar to Section \ref{s_cap}.
A further generalization allows heterogeneous machines,  where several machine types, of different costs and
capacities, are available, and in the objective function
the machine-dependent cost is multiplied by the machine's busy time.
With uniform processing times, \cite{CDKZ24} provides constant competitive ratios.

With heterogeneous machines and inflexible jobs of different processing times,
\cite{Ren-tang} and \cite{liu2021analysis} 
obtain  $O(1)$-approximation algorithms (with constants 9 and 14, respectively) in the offline setting
and  a $\Theta(\mu_w)$-competitive algorithm in the online setting, for 
$ \mu_w=\frac{\max_j(d_j-r_j)}{\min_j(d_j-r_j)} $.
In these works, jobs have  \emph{heights}, where each job may take up non-unit space on each machine.
\cite{liu2021analysis} also provide matching lower bounds for both settings.

Another direction for future research would be a good model for learning-augmented algorithms.
\cite{LT24} and  \cite{NEURIPS2021_8b838818} discuss possible directions for related problems.
One version where learning-augmented algorithms may give better bounds is the uniform processing times setting with restarts in the
flexible-rigid scenario.

\section*{Acknowledgements} We thank an anonymous WAOA 2026 reviewer, whose comments led to  a minor correction.

\bibliographystyle{alpha}
\bibliography{bib-2026}

\appendix

\section{The $2$-Track OnlineGreedyTracking algorithm fails the inclusion property}
\label{a_albers}

We start by discussing the incorrect result of 
\cite{albers2025onlinebusytimescheduling} (v2).
They present the OnlineGreedyTracking algorithm, with input a set of {\bf rigid} jobs.
A {\em track}, following \cite{CKM17}, is a set of pairwise w-disjoint jobs.
A {\em $2$-Track} consists of two tracks.  Let
$Q_1, Q_2, \ldots$ denote the sequence of $2$-Tracks that the OnlineGreedyTracking algorithm produces, where the ordering is important.
A $2$-Track $Q_i$ consists of tracks $T^1_i$ and $T^2_i$.
A {\em bundle} in the sense of \cite{albers2025onlinebusytimescheduling} (v2)
consists of $\lfloor g/2 \rfloor$  $2$-Tracks.
With $J$ being rigid, $I_j$ is fixed from the input, and we define $\Span(A) := \bigcup_{j \in A} I_j$.
\cite{albers2025onlinebusytimescheduling}  (v2) use
a machine to schedule all the jobs $J_B$ that belong to a bundle $B$, with a contribution to the
objective function of $\lambda(\Span(J_B))$. 

It is crucial that the $2$-Tracks produced by the online algorithm satisfy the {\em inclusion} property:
\begin{equation*}
    \Span\left( T^1_i \bigcup T^2_i \right) \subseteq  \Span\left( T^1_{i-1} \bigcup T^2_{i-1} \right)
\end{equation*}
for all $i > 1$.
Assuming this inclusion property, our analysis is as follows. 
\cite{albers2025onlinebusytimescheduling}'s algorithm, slightly adapted, consists of 
running the algorithm from Theorem \ref{thm: randomized}
on the flexible jobs to fix intervals, followed by running OnlineGreedyTracking  on the resulting rigid instance 
(we believe that this can indeed be done with a $p_{max}$-lookahead, similar to the discussion in Section 3.2 of the v2 version of \cite{albers2025onlinebusytimescheduling}).
From now on we adapt ideas from the proof of Theorem 5 of \cite{CKM17}.
If $g \leq 6$, we get a $6$-competitive algorithm by just placing every job on a machine on its own, from Lower Bound (\ref{e_v}).
If $g > 6$, from Theorem \ref{thm: randomized} we have  
\begin{equation}
    \label{e_B1}
    \mathbb{E} \big[ \lambda(\Span(J_{B_1})) \big] \leq (1 + e) \opt.
\end{equation}
For $i > 1$, we have:
\begin{align*}
    \lambda\left( \Span\left(J_{B_i}\right) \right) &= \lambda(\Span(Q_{1 + (i-1) \lfloor g/2 \rfloor}))  \\
    &\leq  \frac{1}{\lfloor g/2 \rfloor } \sum_{k=1}^{\lfloor g/2 \rfloor} \lambda(\Span(Q_{k + (i-2) \lfloor g/2 \rfloor}))\\
    &\leq  \frac{1}{\lfloor g/2 \rfloor } \sum_{k=1}^{\lfloor g/2 \rfloor} 
    v(T^1_{k + (i-2) \lfloor g/2 \rfloor}) +  v(T^2_{k + (i-2) \lfloor g/2 \rfloor})\\
    &=  \frac{1}{\lfloor g/2 \rfloor } v(J_{B_{i-1}}) \\
    &= \frac{g}{\lfloor g/2 \rfloor} \frac1g v(J_{B_{i-1}}) \\
    &\leq \frac73 \frac1g v(J_{B_{i-1}}),
\end{align*}
where the last inequality uses $g > 6$.
Adding these inequalities over $i$ and using Inequality (\ref{e_B1}) we obtain that the expected cost of the output is
\begin{equation}
    \label{e_ratio}
    \mathbb{E} \big[ \sum_{i} \lambda\left( \Span(J_{B_i}) \right)  \big] \leq (1+e) \opt + \frac73 \frac1g v(J) \leq (\frac{10}{3} +e) \opt_c \leq 6.06 \; \opt_c,
\end{equation}
where we used Lower Bound (\ref{e_v}) for the last inequality. The worst case of this analysis  is when $g=7$
and one gets a randomized $\left( 1+e + \frac{g}{\lfloor g/2 \rfloor} \right)$-competitive ratio for all $g>1$,
assuming a lookahead of $p_{max}$. Inequality (\ref{e_ratio}) is further improvable
using the $7^{th}$ track as in Subsection \ref{ss_3tracks}.
In the deterministic setting, this analysis (with $g\leq 7$ a separate case) gives a competitive ratio of at most $7.25$.

Unfortunately, the OnlineGreedyTracking algorithm of \cite{albers2025onlinebusytimescheduling} (v2) fails to get the 
inclusion property. Depending on how their ``reservations"
are interpreted (whether a single job or an entire chain is reserved),
one of the following two instances causes the inclusion property to fail.
\begin{enumerate}
    \item six rigid (so $d_j = r_j + p_j$) jobs:
    $r_1 = 0, p_1 = 99, r_2 = 10, p_2 = 99, r_3 = 20, p_3 = 99, r_4=80, p_4 = 99, r_5 = 100, p_5 = 99, r_6 =178, p_6 = 30$.
    \item seven rigid jobs:
    $r_1 = 0, p_1 = 99, r_2 = 10, p_2 = 99, r_3 = 20, p_3 = 99, r_4=80, p_4 = 99, r_5 = 100, p_5 = 99, r_6 = 110, p_6 = 99, r_7 = 198$, $p_7 = 20$.
\end{enumerate}
Looking at other possible algorithms, putting job 2 on $2$-Track 1 also fails, on the second instance above.
As an aside, we cannot think of a reasonable algorithm that would put job 3 on $2$-Track 1.
We tried other variants for OnlineGreedyTracking and failed, and
it may be the case that the inclusion property cannot be satisfied for $2$-Tracks with only $p_{max}$-lookahead.
Instead, we can use $3$-Tracks rather than $2$-Tracks and modify Algorithm~4 of
\cite{albers2025onlinebusytimescheduling},
and then we can prove the inclusion property, in Subsection \ref{ss_3tracks}.

\section{Tightness for the algorithm from Theorem \ref{thm: randomized}}
\label{a_tight}

We next show that the analysis of Theorem~1 is tight for the
standard implementation, whose cost is the union of the jobs' actual
execution intervals rather than the union of the activated flag
intervals, as in the auxiliary variant from Figure \ref{f_example}.
The proof of the next theorem combines the examples from
Figure \ref{f_example} with the examples from Figure \ref{fig: lb4}.

\begin{theorem}
\label{thm:tightness-standard}
For every $\epsilon>0$, there is a finite instance $J$ such that the
randomized algorithm of Theorem~1 satisfies
\[
  \frac{\mathbb{E}[\mathrm{ALG}(J)]}{\opt(J)}
  \ge 1+e-\epsilon.
\]
The competitive ratio of that algorithm is exactly $1+e$.
\end{theorem}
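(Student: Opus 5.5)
The plan is to combine the two earlier constructions. The instance of Figure \ref{f_example} makes the inside part (Inequality (\ref{e_F_in})) tight. The phase and mega-release instance of Figure \ref{fig: lb4}, with a fine geometric price list, makes the outside part tight. It also ensures that every activated interval $RI(j)$ is actually filled by executed jobs, which is exactly what the auxiliary example lacked.

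\textbf{Construction.} Fix a large $\Delta$ and the list $b_i = e^{(i-1)/\Delta}$, $i=1,\ldots,m$, with $b_m$ huge. As in the proof of Theorem \ref{thm: lb_nested_deterministic}, there are $N$ phases, and phase $q$ starts with mega-release $q$. The $i^{th}$ job of the mega-release has processing time $b_i$. The deadlines are increasing in $i$ and spaced by more than $b_m$.

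Since the algorithm of Theorem \ref{thm: randomized} is fixed and known, the adversary can compute the distribution of its behaviour. For each phase it chooses a target $j^*_q$. We will see that for this algorithm the expected cost is essentially the same for every target, so any fixed choice of $j^*_q$, for instance the middle of the list, works.

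In addition, at the start of each phase we place a window $C_q = [L_q, L_q + b_{j^*_q})$. This window is densely covered by rigid unit-length jobs arranged in groups, exactly as in Figure \ref{f_example}, with small gaps between groups. The gaps guarantee that no flexible job of length at least $w$ fits into an activated interval inside $C_q$. The offline solution opens $C_q$ for the first $j^*_q$ jobs together with the rigid jobs, and one final interval of length $b_m$ for all remaining jobs. Its cost is $(1+o(1))\sum_q b_{j^*_q}$, by the same choice $N \gg b_m$ as before.

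\textbf{Algorithm's cost.} The rigid jobs cost at least $(1-o(1))\lambda(C_q)$, and this part of the schedule is disjoint from everything else. Next, follow the flexible jobs of mega-release $q$. Because of the gaps, no flexible job is scheduled in $C_q$. The smallest unscheduled job therefore reaches its starting deadline and becomes a flag with interval length $e^{\eta+y}$. At that moment every released unscheduled job with $b_i \le e^{\eta+y}$ is started inside this interval.

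Since the grid $b_i$ is $e^{1/\Delta}$-dense and extends up to $b_m$, the largest such job has length at least $e^{-1/\Delta}e^{\eta+y}$. The interval is thus almost fully executed, unlike in Figure \ref{f_example}. The next flag is the next unscheduled job. Its interval has length $e^{\eta+1+y}$, starts after the previous one ends, and hence is disjoint from it. The executed cost in phase $q$ is therefore at least $e^{-1/\Delta}$ times the Button cost of the randomized strategy that presses buttons of length $e^{k+y}$ until reaching $b_{j^*_q}$.

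By the computation (\ref{e_integral}) combined with (\ref{e_Ls}) (now an exact geometric sum), the expected cost of this strategy is
\[
\tfrac{e}{e-1}\,\mathbb{E}[\hat\lambda] = e\, b_{j^*_q},
\]
up to factors $1\pm o(1)$ coming from the discretization. Summing over the phases gives an expected cost of at least $(1+e-\epsilon)\sum_q b_{j^*_q}$. Together with Theorem \ref{thm: randomized}, this yields exactness of the ratio.

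\textbf{Main obstacle.} The delicate step is to check that all the pieces really coexist, in four respects:
\begin{itemize}
\item the rigid gaps block every flexible job from $C_q$ for every $y$;
\item flags in different phases and the final common interval do not overlap the counted intervals;
\item the jobs beyond $j^*_q$ ensure that the last pressed interval is fully used, including when it would cross into the next phase, where (as in the earlier proof) its part inside phase $q$ already has length at least $b_m$;
\item the capped contribution, caused by $e^{\eta+y}$ possibly exceeding $b_m$, is negligible because $j^*_q$ stays far below $m$.
\end{itemize}
I would handle the last point by restricting $j^*_q$ to indices with $b_{j^*_q} \le b_m/e^2$.
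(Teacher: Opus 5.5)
Your plan is essentially the paper's proof. It uses the nested mega-releases of Theorem \ref{thm: lb_nested_deterministic} with $b_i=e^{(i-1)/\Delta}$ and a fixed middle target, and rigid jobs filling $[L_q,L_q+b_{j^*_q})$ to make (\ref{e_F_in}) tight. It then uses pairwise disjoint flag intervals $e^{y},e^{1+y},\ldots$, each containing an executed job longer than $e^{-1/\Delta}$ times its length, and an offline solution of cost $\sum_q b_{j^*_q}+b_m$. The paper additionally fixes $b_{j^*_q}=e^H$ and $b_m=e^{2H}$, so the flag lengths are exactly $e^{t+y}$ for $t=0,\ldots,H$.

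One step fails as written: the rigid filler. With unit-length rigid jobs, every rigid flag opens an interval of length $e^{y}\ge 1=b_1$. The algorithm tests ``fits'' against a single activated interval. So when the first rigid job of $C_q$ becomes a flag, every flexible job $j^q_i$ with $b_i\le e^{y}$ is released, unscheduled, and fits, and it is absorbed into $C_q$. Hence ``no flexible job is scheduled in $C_q$'' is false. Your explanation is also misplaced: the gaps between groups play no role, since the algorithm never combines adjacent activated intervals.

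The paper's fix is to make the rigid jobs short, of length $\rho$ with $e\rho<1$ (it takes $\rho<e^{-2}$, contiguous, with no gaps). Then every rigid-activated interval is shorter than $b_1$. Alternatively, you can keep your filler and note that only the first press, of length $e^{y}<e$, is lost, which is negligible if $b_{j^*_q}\to\infty$.

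You need $b_{j^*_q}\to\infty$ explicitly in any case. The finite geometric sum gives exactly $\mathbb{E}\bigl[\sum_{t=0}^{K}e^{t+y}\bigr]=e\,b_{j^*_q}-1$, so besides the discretization factor $e^{-1/\Delta}$ there is an additive deficit of $1$ per phase. Your middle-of-the-list choice with $b_m$ huge does provide this.

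Finally, your concern about the last interval crossing into the next phase does not arise. With $b_{j^*_q}\le b_m/e^2$, that interval is shorter than $b_m/e$, while the gap to the next phase exceeds $2b_m$.
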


\begin{proof}
We give a family of instances depending on three (very large) integers
$\Delta,H,N$, which will be chosen at the end of the proof.  We assume
$\Delta\ge 2$ and $H\ge 1$.  Set
\[
  m:=2H\Delta+1,\qquad
  k:=H\Delta+1=\left\lceil\frac m2\right\rceil,
  \qquad
  b_i:=e^{(i-1)/\Delta}\quad(1\le i\le m).
\]
Write
\[
  \theta:=b_k=e^H .
\]
Notice that $b_m=e^{2H}=\theta^2$.  Let
\[
  \beta':=m+2,\qquad
  S_q:=(m+2)^{N-q}\quad(1\le q\le N+1),
\]
so that $S_{N+1}=1/(m+2)$.  Starting with $L_1:=0$, define
\begin{equation}
  L_{q+1}
  :=
  L_q+\beta' b_m\bigl(kS_q+S_{q+1}\bigr),
  \qquad 1\le q\le N.
  \label{eq:tightness-phase-starts}
\end{equation}

At time $L_q$, for each $i\in\{1,\ldots,m\}$, release a flexible job
$j_i^q$ with
\begin{equation}
  p_i^q=b_i,
  \qquad
  d_i^q=L_q+\beta' b_m i S_q.
  \label{eq:tightness-megarelease}
\end{equation}
We call these jobs the $q$th mega-release.

Define Phase $q$ to be the interval $[L_q,L_{q+1})$.
We also add rigid jobs that fill an interval of length $\theta$ at the
beginning of every phase.  Choose an integer $\nu$ so large that, with
$\rho:=\theta/\nu$, we have $\rho<e^{-2}$.  For
$a=0,\ldots,\nu-1$, add a rigid job $\pi_a^q$ with
\begin{equation}
  r(\pi_a^q)=L_q+a\rho,\qquad
  p(\pi_a^q)=\rho,\qquad
  d(\pi_a^q)=L_q+(a+1)\rho.
  \label{eq:tightness-rigid-jobs}
\end{equation}
Thus the rigid jobs of phase $q$ have union
\[
  G_q:=[L_q,L_q+\theta).
\]
Note that $|J| = N(m+\nu)$.
We first describe a feasible offline schedule.  Since
\[
  d_1^q-L_q=\beta' b_mS_q>\theta,
\]
the interval $G_q$ lies in the window of every $j_i^q$. 
Since $p_i^q \leq \theta$ for $i \leq k$,
the offline schedule can execute all the jobs $j_i^q$ with $i \leq k$
in $G_q$, in parallel with the rigid jobs.

It is easy to check that 
all remaining jobs can be put into one common final interval
$I^\star:=[L_{N+1},L_{N+1}+b_m)$. Thus:
\begin{equation}
  \opt(J) \leq N\theta+b_m=N \theta +\theta^2.
  \label{eq:tightness-opt}
\end{equation}

We now lower-bound the cost of the online algorithm.
Recall the random choice $y\in[0,1)$.
Every flag interval opened by a rigid job has length
strictly smaller than $e\rho<1$.  Since every flexible job has
processing time at least $b_1=1$, no flexible job is scheduled in such
an interval.  The rigid jobs force actual span exactly
$\theta$ inside every $G_q$.

We next show that the relevant flag intervals of different phases,
and the relevant flag intervals within one phase, are disjoint.  
Indeed, let
\[
  \bs_i^q:=d_i^q-b_i
\]
be the starting deadline of flexible job $j_i^q$.  The first such starting deadline
is strictly later than the end of $G_q$ and every flag interval opened
in $G_q$ by a rigid job.  
 Therefore no flexible job is executed in  any of these  $G_q$. Moreover,
\begin{equation}
  \bs_{i+1}^q-\bs_i^q = \beta' b_mS_q-(b_{i+1}-b_i) > (\beta'-1)b_m.
  \label{eq:tightness-start-gap}
\end{equation}
Every flag interval used before the first $k$ flexible jobs have been scheduled
has length less than $e \theta\le b_m$.  Also,
\begin{equation}
  L_{q+1}-\bs_k^q = \beta' b_mS_{q+1}+\theta > e\theta.
  \label{eq:tightness-phase-gap}
\end{equation}
Thus all the flag intervals that we count in phase $q$ are mutually
disjoint  and end before $L_{q+1}$.

Every job left over after the first $k$ flexible jobs of a mega-release
have been covered has starting deadline after all $N$ phases.
If such a job $j$ becomes a flag job,
no job $j'$ from the first $k$ flexible jobs of all the mega-releases 
can be scheduled in $RI(j)$ since the deadline of $j'$ is in one of  the phases.
We ignore the contribution of jobs scheduled after all the $N$ phases.

Fix a phase $q$.  Suppose that $j_i^q$ opens a flag interval of length
$\ell$, and let $h\ge i$ satisfy $b_h\le \ell$.  Then one can easily check
that $j_h^q$ fits inside this flag interval.  

We claim that, before all the first $k$ jobs of this mega-release have
been scheduled, the algorithm opens flag intervals of lengths
\begin{equation}
  e^y,e^{1+y},\ldots,e^{H+y}.
  \label{eq:tightness-flag-sequence}
\end{equation}
The first unscheduled job has processing time $b_1=1$, so it opens the
interval of length $e^y$.  Inductively, after an interval of length
$e^{t+y}$ has opened, every job of the current mega-release with
$b_i\le e^{t+y}$ is scheduled.  If $t<H$, the smallest processing time
of a remaining job is at most
\[
  e^{t+y+1/\Delta}<e^{t+1+y},
\]
and is larger than $e^{t+y}$; therefore its rounded flag length is
$e^{t+1+y}$.  Since $b_k=e^H$, the interval of length $e^{H+y}$
schedules the first $k$ jobs. 
No other flag event can interrupt this sequence.

For $t=0,\ldots,H$, let
\[
  a_t:=\lfloor\Delta(t+y)\rfloor+1.
\]
Then
\begin{equation}
  e^{-1/\Delta}e^{t+y}
  <
  b_{a_t}
  \le
  e^{t+y}.
  \label{eq:tightness-utilization}
\end{equation}
The choice of $m$ guarantees $a_t\le m$.  For $t\ge1$,
$b_{a_t}>e^{t-1+y}$, so $j_{a_t}^q$ was not scheduled in an earlier
flag interval.  It is therefore scheduled in the interval of length
$e^{t+y}$.  Its actual execution interval has length $b_{a_t}$ and is
contained in that flag interval.  Hence the union of the actual
executions in the flag interval has length at least $b_{a_t}$.

By \eqref{eq:tightness-start-gap}--\eqref{eq:tightness-phase-gap}, the
flag intervals being counted are disjoint.  Thus, for every phase $q$
and every $y\in[0,1)$, the actual span contributed after $G_q$ is at
least
\begin{equation}
  e^{-1/\Delta}\sum_{t=0}^{H}e^{t+y}.
  \label{eq:tightness-phase-cost}
\end{equation}
Taking expectation over the single random choice $y$ gives
\begin{align}
  \mathbb{E}_y\left[\sum_{t=0}^{H}e^{t+y}\right]
  &=
  \left(\int_0^1 e^y\,dy\right)
  \sum_{t=0}^{H}e^t\notag\\
  &=
  (e-1)\frac{e^{H+1}-1}{e-1}
  =
  e\theta-1.
  \label{eq:tightness-rounded-sum}
\end{align}
The same $y$ is used in every phase, but independence is not needed;
linearity of expectation suffices.  Combining the rigid and flexible
contributions,
\begin{equation}
  \mathbb{E}[\mathrm{ALG}(J)]
  \ge
  N\left(\theta+e^{-1/\Delta}(e\theta-1)\right).
  \label{eq:tightness-alg}
\end{equation}

Finally, by \eqref{eq:tightness-opt} and \eqref{eq:tightness-alg},
\begin{align}
  \frac{\mathbb{E}[\mathrm{ALG}(J)]}{\opt(J)}
  &\ge
  \frac{N\left(\theta+e^{-1/\Delta}(e\theta-1)\right)}
       {N\theta+\theta^2}\notag\\
  &=
  \frac{N}{N+\theta}
  \left(
    1+e^{1-1/\Delta}-e^{-H-1/\Delta}
  \right).
  \label{eq:tightness-ratio}
\end{align}
The right-hand side tends to $1+e$ by first taking $\Delta$ and $H$
large and then taking $N/\theta$ large.  Thus it is at least
$1+e-\epsilon$ for very large finite choices of $\Delta,H,N$.

The entire instance is fixed before $y$ is chosen, so the construction
is valid against an oblivious adversary.  If rational input data are
required, all processing times and endpoints may be perturbed to
sufficiently close rational values; 
this changes the final ratio by an arbitrarily small amount.
\end{proof}

\end{document}